\documentclass[journal,twoside,web]{ieeecolor}

\usepackage{graphicx}
\usepackage{subcaption}   
\usepackage{float}        
\usepackage{balance}   
\usepackage{xcolor}
\newcommand{\rev}[1]{\textcolor{black}{#1}}
\newcommand{\rr}[1]{\textcolor{black}{#1}}
\usepackage{generic}
\usepackage{cite}
\usepackage{amsmath,amssymb,amsfonts}
\usepackage{algorithmic}
\usepackage{graphicx}
\usepackage{textcomp}
\usepackage{hyperref}
\usepackage{graphicx}
\usepackage{subcaption}
\usepackage{epsfig} 
\usepackage{amsmath}

\usepackage{cite}
\usepackage{lcsys}
\usepackage{mathtools}

\newtheorem{theorem}{Theorem}[section]

\newtheorem{proposition}[theorem]{Proposition}

\newtheorem{definition}[theorem]{Definition}
\newtheorem{remark}[theorem]{Remark}

\begin{document}

\def\BibTeX{{\rm B\kern-.05em{\sc i\kern-.025em b}\kern-.08em
    T\kern-.1667em\lower.7ex\hbox{E}\kern-.125emX}}
\markboth{\journalname, VOL. XX, NO. XX, XXXX 2017}
{Author \MakeLowercase{\textit{et al.}}: Preparation of Papers for IEEE Control Systems Letters (August 2022)}

\title{Robust Direct Data-Driven Hamiltonian for Safe Set Computation under Measurement Noise and Disturbances}

\author{
    Mohammad Bajelani\textsuperscript{a}, 
    Christopher A. Strong\textsuperscript{b},
    Claire J. Tomlin\textsuperscript{b},
    Jason J. Choi\textsuperscript{c}*, \\ and
    Klaske van Heusden\textsuperscript{a}* 
    \thanks{We acknowledge the support of the Natural Sciences and Engineering Research Council of Canada (NSERC) [RGPIN-2023-03660].}%
    \thanks{* indicates equal co-advising.}
    \thanks{\textsuperscript{a} Mohammad Bajelani and Klaske van Heusden are with the University of British Columbia, Canada. {\tt\small mohammad.bajelani, klaske.vanheusden@ubc.ca}}%
    \thanks{\textsuperscript{b} Christopher A. Strong and Claire J. Tomlin are with the University of California, Berkeley, CA 94720, USA.}%
    \thanks{\textsuperscript{c} Jason J. Choi is with the University of California, Los Angeles, CA 90095, USA.}
}

\maketitle
\thispagestyle{empty}

\begin{abstract}
    \rr{Safe set computation is a fundamental challenge in safety-critical control systems, especially in direct data-driven settings where safety analysis is performed directly from noise-affected measurements, without explicit modeling.} \rr{A recently proposed method, Data-Driven Hamiltonian (DDH)}, enables reachability analysis directly from measurements, without relying on prior knowledge of the underlying system dynamics. This paper extends the DDH framework to a robust setting that accounts for measurement noise, exogenous disturbances, and sampling-induced state-velocity estimation error. \rr{A Robust Data-Driven Hamiltonian (R-DDH) is derived from noisy measurements and shown to yield a certified lower bound on the exact Hamiltonian.} This results in a provable under-approximation of the value function and an inner approximation of the associated safe set. \rr{The gap between the data-driven and exact Hamiltonians is quantified, and it is shown to converge to zero with more data in a noise-free setting with additive disturbances.} The effectiveness of the approach is shown through two case studies: a constrained double integrator and an aircraft taxiing system with a nonlinear closed-loop controller operating under perceptual uncertainty.
\end{abstract}
\begin{IEEEkeywords}
Data-driven safety certification, Hamilton–Jacobi reachability, Robust safe sets, Learning-based control, Data-driven  reachability analysis
\end{IEEEkeywords}

\vspace{-1em}
\section{Introduction}

\rr{A robust safe set is a set of states for which there exists a feedback law that renders the set robustly positively invariant with respect to all admissible disturbances and uncertainties.} Classical computation of such sets typically requires an accurate dynamical model, which is often difficult to obtain or unnecessarily complex for control synthesis. Recently, a few \emph{direct data-driven} approaches have been proposed for reachability analysis using measurement data, enabling the construction of safe sets directly from raw measurements without explicitly identifying a model. For practical applicability of data-driven methods, robustification needs to be addressed.

Recently, a few approaches have been proposed for reachability analysis using measurement data. In~\cite{lew2020adversarial}, a random set theory framework with adversarial sampling was developed, and convergence of sampled reachable sets to the convex hull of the true reachable set was proven. A scenario optimization approach with probabilistic coverage guarantee was introduced in~\cite{dietrich2025data}, yielding a reduced sample complexity. In~\cite{alanwar2021data}, matrix zonotopes were employed to bound the reachable sets of unknown systems, while~\cite{alanwar2022enhancing} leveraged side information encoded as temporal logic constraints to reduce conservatism. In~\cite{griffioen2023data}, Gaussian process state-space models were used to compute finite-horizon forward reachable sets with exact trajectory probability measures. Active learning and disagreement-based sampling are used in~\cite{chakrabarty2020active} to estimate reachable sets from trajectory data. \rr{A neural network framework for computing over-approximations of reachable sets using reachability functions was proposed in~\cite{sun2022neureach}.} In~\cite{strong2025data}, a deterministic sampling method leveraging Lipschitz continuity and successor-state pairs was proposed to iteratively construct invariant sets for unmodeled nonlinear systems. However, these approaches typically rely on explicit (data-driven) models, focus on discrete-time systems, or offer probabilistic rather than deterministic bounds. Consequently, they bypass the unique challenge of continuous-time safety certification under both measurement noise and disturbances.

As a well-established model-based technique, Hamilton–Jacobi (HJ) reachability provides formal safety guarantees by computing a value function whose zero-superlevel set characterizes the safe set \cite{chen2018hamilton}. However, classical HJ methods rely on access to an explicit system model. \rr{The Data-Driven Hamiltonian (DDH) framework was recently proposed in~\cite{choi2025data} to eliminate the need for an explicit model by constructing a lower bound on the Hamiltonian directly from data. However, it assumes access to \emph{noise-free} input, state, and state velocity (time derivative of the state) measurements and does not account for exogenous disturbances.} To enable the practical deployment of DDH, these assumptions must be relaxed and the framework robustified. The main contributions of this paper relative to the prior work in \cite{choi2025data} are:
\begin{itemize}
    \item Robust Data-Driven Hamiltonian (R-DDH) reachability is proposed that extends the nominal framework to account for measurement noise and unknown but bounded exogenous disturbances in the dataset.
    \item To account for practical sampling effects, we quantify the uncertainty induced by finite-difference state-velocity estimation and provide guidance on selecting a suitable sampling period in the presence of measurement noise.
    \item The approximation gap between model-based and data-driven Hamiltonians is quantified to evaluate and manage conservatism. This is important because robustification can introduce excessive conservatism, rendering safety methods practically unusable. By isolating the distinct effects of noise, data sparsity, and disturbances, we demonstrate that the disturbance-induced gap completely vanishes for additive disturbances.
\end{itemize}

The remainder of the paper is organized as follows. Section~\ref{sec:problem} formulates the problem and reviews the necessary preliminaries. Section~\ref{sec:method} presents the proposed Robust Data-Driven Hamiltonian (R-DDH) reachability framework. Section~\ref{sec: Model-Based vs Data-Driven Hamiltonian} provides a rigorous gap analysis between the model-based and data-driven Hamiltonians to characterize the conservatism of the approach. Section~\ref{sec:results} demonstrates the effectiveness of the proposed approach through two numerical case studies: a constrained double integrator and an aircraft taxiing system with a nonlinear closed-loop controller under perceptual uncertainty. Section~\ref{sec:conclusion} concludes the paper and discusses future research directions.

\noindent\textbf{Notation.}
The induced \(\ell_2\)-norm is denoted by \(\|\cdot\|\), and \(|\cdot|\) denotes the elementwise absolute value of a vector. The \(\ell_2\)-ball of radius \(r\) centered at \(c\in\mathbb{R}^n\) is defined as
\[
\mathcal{B}(c,r) \coloneqq \{x\in\mathbb{R}^n \mid \|x-c\|\le r\}.
\]
The axis-aligned box with half-lengths \(l\in\mathbb{R}^n\) centered at \(c\) is denoted by
\[
\mathcal{R}(c,l) \coloneqq \{x\in\mathbb{R}^n \mid |x-c|\le l\}.
\]
If the center is the origin, we write \(\mathcal{B}(r)\) and \(\mathcal{R}(l)\). The Minkowski sum of sets \(\mathcal{A}\) and \(\mathcal{C}\) is denoted by \(\mathcal{A}\oplus\mathcal{C}\), and the set difference is denoted by \(\mathcal{A}\setminus\mathcal{C}\). In the dataset \(\mathfrak{D}\), the element \(\mathfrak{D}_j^i\) denotes the \(i\)-th sample from the \(j\)-th trajectory. For vectors \(a,b\in\mathbb{R}^n\), the elementwise product is denoted by \(a\odot b\). For \(p\in\mathbb{R}^n\), \(\operatorname{sgn}(p)\in\{-1,0,1\}^n\) denotes the elementwise sign vector, with zero entries mapped to zero.
For a differentiable value function \(V(x,t)\), the gradient is denoted by
\[
\nabla_x V(x,t) \coloneqq 
\left[
\frac{\partial V}{\partial x_1},\ldots,
\frac{\partial V}{\partial x_n}
\right]^\top,
\]
and \(\partial_t V(x,t)\) denotes the partial derivative with respect to \(t\). $x$ denotes states in the state space while $\mathbf{x}(s)$ denotes a trajectory state of the system. Note that, throughout the paper, \(x\in\mathbb{R}^n\) denotes a generic state in the state space, whereas \(\mathbf{x}(s)\in\mathbb{R}^n\) denotes the state of the system trajectory at time \(s\).

\section{Problem Formulation and Background on Hamilton--Jacobi Reachability} \label{sec:problem}
\subsection{Constrained Dynamical System}

Consider a constrained, continuous-time nonlinear system evolving over a finite horizon \( s \in [-t, 0] \):
\begin{equation} \label{eq:system}
    \dot{\mathbf{x}}(s) = f\!\big(\mathbf x(s), \mathbf u(s), \mathbf d(s)\big),
\end{equation}
where \( \dot{\mathbf{x}}(s) \in \mathbb{R}^n \), \( \mathbf{x}(s) \in \mathbb{R}^n \), \( \mathbf{u}(s) \in \mathbb{R}^m \), and \( \mathbf{d}(s) \in \mathbb{R}^p \) denote the system state velocity, state, control input, and exogenous disturbance at time \(s\), respectively, with initial condition \(\mathbf{x}(-t) = x \). We assume that the function \( f  \) is Lipschitz continuous in \( \mathbf x(s) \) and \( \mathbf d(s) \). The true state and state velocity are not directly accessible; they are corrupted by measurement noise:
\begin{equation}
\label{eq:measurement}
\begin{aligned}
\mathbf{x}_m(s) &= \mathbf{x}(s) + \mathbf{n}_x(s),\\
\mathbf{v}_m(s) &= \dot{\mathbf{x}}(s) + \mathbf{n}_v(s).
\end{aligned}
\end{equation}
where \( \mathbf x_m (s)\) and \( \mathbf v_m (s)\) denote the measured state and state velocity, and  
\(\mathbf n_x(s), \mathbf n_v(s) \in \mathbb{R}^n\) represent \rr{unknown but bounded measurement noise. It should be noted that \( n_x \) and \( n_v \) can represent a wide range of effects, such as stochastic sensor noise, calibration error, and estimation error.} The disturbance, control input, and uncertainty signals are constrained to compact sets:
\begin{equation}
\label{eq:constraints}
\begin{aligned}
\mathbf{u}(s) &\in \mathcal{U}, \quad 
\mathbf{d}(s) \in \mathcal{D},\\
\mathbf{n}_x(s) &\in \mathcal{N}_x, \quad 
\mathbf{n}_v(s) \in \mathcal{N}_v.
\end{aligned}
\end{equation}
\rr{Informally, the disturbance $\mathbf  d(s)$ may act adversarially with respect to the control $\mathbf  u(s)$; that is, while the control input seeks to achieve a specified objective, the disturbance can realize a worst-case scenario. More formally,} it is assumed that the disturbance signal $d$ is unknown but bounded and chosen in reaction to the control $u$ via a causal (nonanticipative) strategy $\xi_d \in \Xi_{[-t,0]}:\mathcal U_{[-t,0]}\!\to\!\mathcal D_{[-t,0]}$. Here, $\mathcal U,\mathcal D$ are Lebesgue-measurable sets, and $\mathcal U_{[-t,0]}$ (resp. $\mathcal D_{[-t,0]}$) denotes the set of Lebesgue-measurable maps from $[-t,0]$ to $\mathcal U$ (resp. $\mathcal D$) \cite{evans1984differential}.


We consider two forms of Lipschitz continuity, which ensure that bounded variations in the arguments of \( f \) lead to bounded variations in the state velocity. The first is the \emph{uniform} representation, defined by constants \( L^x, L^d \in \mathbb{R}_{\ge 0} \) such that, for all \( x, x' \in \mathbb{R}^n \), \( d, d' \in \mathbb{R}^p \), and \( u \in \mathcal{U} \),
\begin{equation} \label{eq:spherical_lipschitz}
    \|f(x,u,d) - f(x',u,d')\| \leq L^x \|x - x'\| + L^d \|d - d'\|.
\end{equation}
This condition implies that \( f(x',u,d') \!\in\! \mathcal{B} \big(f(x,u,d), r\big) \), where the radius is 
\(
r = L^x \|x - x'\| + L^d \|d - d'\|.
\) The second form is the \emph{component-wise} representation, which uses constant matrices \( L^{x}_{\mathrm{c}} \in \mathbb{R}^{n \times n} \) and \( L^{d}_{\mathrm{c}} \in \mathbb{R}^{n \times p} \):
\begin{equation}
\label{eq:rectangular_lipschitz}
|f(x,u,d) - f(x',u,d')| \leq L^{x}_{\mathrm{c}}\,|x - x'| + L^{d}_{\mathrm{c}}\,|d - d'|.
\end{equation}
This bounds the variation of \( f  \) as \( f(x',u,d') \!\in\! \mathcal{R}\big(f(x,u,d), l\big)\), where $l = L^{x}_{\mathrm{c}}\,|x - x'| + L^{d}_{\mathrm{c}}\,|d - d'|.$ 


   
In the data-driven setting, the function \( f \) is assumed to be unknown, but a noisy sampled dataset is available, given by
\begin{equation}
\label{eq:dataset}
\begin{aligned}
\mathfrak{D} &= \Big\{\, \{\mathfrak{D}^i_j\}_{i=1}^{N_s} \,\Big\}_{j=1}^{N_T},\\
\mathfrak{D}^i_j &= \big(x_{m,j}^i,\, u_j^i,\, v_{m,j}^i \big),
\end{aligned}
\end{equation}
where \( N_T \) is the number of trajectories and \( N_s \) is the number of samples in each trajectory. We assume uniform sampling with sampling period \( T_{\text{s}} > 0 \). The \( i \)-th sample in trajectory \( j \) is taken at time \( s = i T_{\text{s}} \), where \( x_{m,j}^i = x_{m}(i T_{\text{s}}) \), and the quantities \( u_j^i \) and \( v_{m,j}^i \) are defined analogously. For analysis, we denote the unknown true state, state velocity, and disturbance at sample \((i,j)\) by
\(
x_j^i, \,
v_j^i = f\big(x_j^i,\,u_j^i,\,d_j^i\big), \,
d_j^i,
\)
and the unknown measurement noise terms by
\(
n_{x,j}^i, \, n_{v,j}^i.
\)
Therefore, the measured state and state velocity satisfy
\begin{equation}
\label{eq:meas_rel}
\begin{aligned}
x_{m,j}^i &= x_j^i + n_{x,j}^i,\\
v_{m,j}^i &= v_j^i + n_{v,j}^i.
\end{aligned}
\end{equation}

\begin{remark}
    \rr{Unlike LTI systems, which can be represented using finite data under persistency of excitation~\cite{Ljung1999}, system identification for nonlinear systems depends strongly on the underlying dynamics and the quality of the data. In this paper, we adopt a Lipschitz continuity assumption to avoid deriving an explicit model and instead enable reachability analysis directly from data.}
\end{remark}

\subsection{Definition of Safety and Safe Sets}
\rr{In the Hamilton-Jacobi framework, safety is characterized by a user-defined failure set
$\mathcal{X}_{\mathcal{F}} \subset \mathbb{R}^{n}$, which encodes unsafe states that must be avoided (e.g., obstacle regions), and a target set
$\mathcal{X}_{\mathcal{T}} \subset \mathbb{R}^{n}$, which encodes desirable states that the system aims to reach or remain in (e.g., a goal region or a terminal invariant set).} \rr{Informally, the safe set is the set of states from which there exists an admissible control policy that keeps the system trajectory outside \( \mathcal{X}_{\mathcal{F}} \) for a finite time horizon despite all admissible disturbance realizations and, if a target set \( \mathcal{X}_{\mathcal{T}} \) is specified, steers the trajectory into \( \mathcal{X}_{\mathcal{T}} \).} This paper considers two commonly used notions of safety: the \emph{Avoid Backward Reachable Tube} (Avoid BRT) and the \emph{Reach–Avoid Backward Reachable Tube} (Reach–Avoid BRT). 

\begin{definition}[Avoid BRT \cite{choi2021robust}] \label{def: avoid BRT}
The Avoid BRT is the set of states from which the system can be controlled to avoid entering the failure set \( \mathcal{X}_{\mathcal{F}}\) over \( s \in [-t, 0] \), for all admissible disturbances:
\color{black} \begin{equation}\label{eq: Avoid BRT}
\text{Avoid}(t, \mathcal{X}_{\mathcal{F}}) =\bigg\{ 
 x \;\bigg|\;\begin{array}{l} \exists u \in \mathcal{U}_{[-t,0]} \,  \text{s.t.}\; 
\forall \xi_d \in \Xi_{[-t,0]},\\ \mathbf x(s) \notin \mathcal{X}_{\mathcal{F}},  \forall s \in [-t, 0] 
\end{array}\bigg\}.
\end{equation}
\end{definition}
\vspace{1em}

\begin{definition}[Reach–Avoid BRT \cite{choi2021robust}] \label{def: reach-avoid BRT}
The Reach–Avoid BRT is the set of states from which the system can be controlled to reach a target set \( \mathcal{X}_\mathcal{T} \subset \mathbb{R}^{n} \) at some time \( s \in [-t, 0] \), while avoiding the failure set \( \mathcal{X}_\mathcal{F} \subset \mathbb{R}^{n} \) throughout the entire time interval \([-t,s]\), for all admissible disturbances:
\color{black} \begin{equation} \label{eq: ReachAvoid BRT}
\begin{aligned}
& \text{ReachAvoid}(t, \mathcal{X}_\mathcal{F}, \mathcal{X}_\mathcal{T}) 
=  \\ &\left\{ x \,\middle|\, 
\begin{array}{l}
\exists u \in \mathcal{U}_{[-t,0]}, \exists s \in [-t, 0] \;\text{s.t.} \; \forall {\xi_d \in \Xi_{[-t,s]}}, \\ \mathbf x(s) \in \mathcal{X}_\mathcal{T} \ \& \, \mathbf x(\tau) \notin \mathcal{X}_\mathcal{F},\ \forall \tau \in [-t, s]
\end{array}
\right\}.
\end{aligned}
\end{equation}
\end{definition}

\subsection{Background on HJ Reachability}

Hamilton–Jacobi (HJ) reachability characterizes safety in dynamical systems via the value function derived from the dynamic programming principle. \rr{The value function encodes the set of states that satisfy the safety specification induced by the previously defined avoid and reach–avoid BRTs in Definitions~(\ref{def: avoid BRT}–\ref{def: reach-avoid BRT}).} In particular, the sign of the value function determines safety: non-negative values indicate states from which the safety objective can be satisfied, whereas negative values correspond to states that violate the specification.

In the following, two formulations of value functions are considered for the Avoid and the Reach–Avoid BRTs provided in \eqref{eq: Avoid BRT}–\eqref{eq: ReachAvoid BRT}. To define the value function, it is needed to define the failure set and target set by some Lipschitz continuous functions:
\begin{equation}\label{eq:set_encoding}
\mathcal{X}_{\mathcal F}=\{x \mid g(x)\le 0\},\,
\mathcal{X}_{\mathcal T}=\{x \mid l(x)\ge 0\}.
\end{equation}
Therefore, the value functions for the Avoid and Reach–Avoid BRTs \eqref{eq: Avoid BRT}–\eqref{eq: ReachAvoid BRT} can be defined as follows, respectively:
\begin{equation} \label{eq:avoid_value}
    \color{black}V_\text{A}(x,t) :=  \min_{\xi_d \in \Xi_{[-t,0]}} \max_{u \in \mathcal{U}_{[-t,0]}} \min_{s \in [-t,0]} g\big(\mathbf x(s)\big),
\end{equation}
\begin{equation} \label{eq:reachavoid_value}
\begin{aligned}
    &V_\text{RA}(x,t) \\&:= \color{black} \min_{\xi_d \in \Xi_{[-t,s]}} \max_{u \in \mathcal{U}_{[-t,s]}}  \max_{s \in [-t,0]} \min \!\left\{ l\big(\mathbf x(s)\big), \min_{\tau \in [-t,s]} g\big(\mathbf x(\tau)\big) \right\}.    
\end{aligned}
\end{equation}
The corresponding Avoid and Reach–Avoid BRTs are the zero-superlevel sets of the value functions:
\begin{equation} \label{eq:avoid_set}
\mathrm{Avoid}(t,\mathcal{X}_{\mathcal F}) = \{\, x \mid V_\text{A}(x,t) \ge 0 \,\},
\end{equation}
\begin{equation} \label{eq:reachavoid_set}
\text{ReachAvoid}(t,\mathcal{X}_{\mathcal F},\mathcal{X}_{\mathcal T}) = \{\, x \mid V_\text{RA}(x,t) \ge 0 \,\}.
\end{equation}

The value functions in \eqref{eq:avoid_value}–\eqref{eq:reachavoid_value} represent a two-player differential game: the control seeks to reach the target (for the Reach-Avoid BRT) while avoiding the failure set, and the disturbance acts adversarially. By the dynamic programming principle~\cite{bardi1997optimal,fisac2015reach}, the value function satisfies the Hamilton--Jacobi variational inequality (HJ-VI). For the Avoid BRT, the HJ-VI is:
\begin{equation} \label{eq:avoid_brt_hjvi}
    0 = \min \big\{\, g(x) - V_\text{A}(x,t),\ -\partial_t V_\text{A}(x,t) + H\big(x,\nabla_x V_\text{A}(x,t)\big) \,\big\},
\end{equation}
with terminal condition $V_\text{A}(x,0)=g(x)$. For Reach–Avoid BRT, HJ-VI is:
\begin{equation} \label{eq:reachavoid_brt_hjvi}
\begin{aligned}
0 = \min \Big\{\, g(x) - V_\text{RA}(x,t),\ \max \big\{\, l(x) - V_\text{RA}(x,t),\ \\ -\partial_t V_\text{RA}(x,t) + H\big(x,\nabla_x V_\text{RA}(x,t)\big) \big\} \Big\},
\end{aligned}
\end{equation}
with terminal condition $V_\text{RA}(x,0)=\min\{\,l(x),\,g(x)\,\}$. The Hamiltonian encodes the worst-case interaction between control and disturbance:
\begin{equation} \label{eq:hamiltonian}
    H(x,p) = \max_{u \in \mathcal{U}} \ \min_{d \in \mathcal{D}} \ p^\top f(x,u,d),
\end{equation}
where \(p := \nabla_x V_\text{A}(x,t)\) for the Avoid BRT and \(p := \nabla_x V_\text{RA}(x,t)\) for the Reach–Avoid BRT, and in both cases \(p \in \mathbb{R}^n\) denotes the costate. Note that the Hamiltonian in~\eqref{eq:hamiltonian} is the core element of HJ-VI and it is the only term in the HJ-VI that explicitly depends on the dynamics \(f\).

\subsection{Data-Driven Hamiltonian: Overview}

In \cite{choi2025data}, a data-driven framework is proposed to compute reachable sets directly from noise-free measurements, thereby eliminating the need for an explicit system model while preserving rigorous safety guarantees. In particular, the framework constructs a lower bound on the Hamiltonian in \eqref{eq:hamiltonian}. This lower bound yields an under-approximation of the true value function and, consequently, an inner approximation of the true avoid and reach-avoid BRTs in \eqref{eq: Avoid BRT}--\eqref{eq: ReachAvoid BRT}. More specifically, the DDH framework represents the dynamics through the set of all possible state velocities at a given state, referred to as the vector field bound, \(\dot{\mathbf  x}(s) \in F(\mathbf x(s))\), where \(F(x) = \{f(x,u) \mid u \in U\}\). Note that disturbances are not considered in \cite{choi2025data}. To approximate the Hamiltonian in \eqref{eq:hamiltonian}, DDH uses the information collected at sampled states to estimate the state velocity at an arbitrary query state \(x\).


Let \(\tilde{v}_j^i := f(x,u_j^i)\) denote the true velocity at the query state \(x\) corresponding to the observed control input \(u_j^i \in \mathcal U\), and let \(v_{j}^i := f(x_j^i,u_j^i)\) denote the true velocity at the sampled state \(x_j^i\) under the same input. Note that, in the nominal setting considered in \cite{choi2025data}, \(n_x=0\) and \(n_v=0\), and therefore \(v_{m,j}^i = v_j^i\). An uncertain estimate of \(\tilde{v}_j^i\) can then be constructed around \(v_j^i\) by introducing an \emph{uncertainty set} \(\mathcal{E}(x,\mathfrak{D}_j^i)\) such that
\[
\tilde{v}_j^i \in \mathcal{E}(x,\mathfrak{D}_j^i).
\]
The set \(\mathcal{E}(x,\mathfrak{D}_j^i)\) depends on the query state \(x\) and the data point \(\mathfrak{D}_j^i\), and will be defined in detail in Section~\ref{sec:method}. In particular, $\mathcal{E}(x,\mathfrak{D}_j^i)$  bounds the possible variation in the velocity between the sampled state \(x_j^i\) and the query state \(x\). Based on this construction, the data-driven Hamiltonian is defined as
\[
  \widehat{H}(x,p)
  := \max_{(i,j)\in \mathcal{I}}\;
     \min_{v\in\mathcal{E}(x,\mathfrak{D}^{i}_{j})}
     p^\top v,
\]
where \(\mathcal{I} := \{(i,j) \mid i=1,\dots,N_s,\; j=1,\dots,N_T\}\). It is shown in \cite{choi2025data} that, provided the uncertainty set \(\mathcal{E}\) is valid, the data-driven Hamiltonian satisfies
\[
\widehat{H}(x,p) \leq H(x,p).
\]
As a result, the value function obtained from \(\widehat{H}\) provides a conservative approximation of the true value function and therefore yields an inner approximation of the nominal BRT.

\begin{remark}[Direct vs. Indirect Data-Driven Settings]
    In direct data-driven settings, the system dynamics \( f  \) is unknown, and only uncertain measurements are available. An alternative, common in indirect approaches, is to identify a model from data and then evaluate the Hamiltonian~\eqref{eq:hamiltonian}. For general nonlinear systems, however, this poses a challenging system identification problem, particularly when a rigorous \rev{uncertainty quantification} is required. In the subsequent sections, we construct uncertainty sets that capture uncertain measurement, process disturbances, and sampling effects (when the state velocity is not directly measurable), thereby enabling robust reachability analysis for continuous-time systems directly from raw sampled data without relying on an explicit model.
\end{remark}

\subsection{Problem Statement}

The objective is to compute inner approximations of the robust Avoid and Reach-Avoid BRTs~\eqref{eq: Avoid BRT}-\eqref{eq: ReachAvoid BRT} from the noisy sampled dataset~\eqref{eq:dataset}, without explicitly identifying the system dynamics~\eqref{eq:system}. Given \(\mathfrak{D}\), we first seek to construct a certified lower bound on the Hamiltonian that accounts for measurement noise, exogenous disturbances, and state-velocity estimation error when the state velocity is not directly measured. This robust data-driven Hamiltonian is then used in the HJ-VIs~\eqref{eq:avoid_brt_hjvi}--\eqref{eq:reachavoid_brt_hjvi} to obtain conservative safe-set approximations.

\section{Robust Data-Driven Hamiltonian}\label{sec:method}

In this section, we extend \cite{choi2025data} to a robust setting in which measurement noise and disturbances are present. We guarantee that the resulting approximation provides a conservative estimate of the value function and therefore induces an inner approximation of the corresponding safe sets. To this end, we first present a procedure for characterizing the uncertainty set of the state velocity from a single uncertain measurement. Subsequently, for a given query state, we select the data point in the dataset that maximizes the Hamiltonian under the worst-case realization of the associated uncertainty.

\subsection{Uncertainty representation of the state velocity for a single data point using the uniform Lipschitz constant} \label{sec: uniform results}

Consider a single data point \(\mathfrak D^i_j = \big(x_{m,j}^i,\,u_j^i,\,v_{m,j}^i\big)\) in the dataset \eqref{eq:dataset}.
Given the control input \(u_j^i\), the goal is to estimate the state velocity at a query state \(x\) under an arbitrary disturbance \(d\in\mathcal D\), i.e.,
\(\tilde v_j^i := f(x,\,u_j^i,\,d)\). \textcolor{black}{Since the state velocity $\tilde v_j^i$ is influenced by an \emph{unknown} disturbance at a \emph{query} state, we characterize it as an uncertainty set using the system’s Lipschitz continuity and measured state velocity $v_{m,j}^i$.} To achieve a rigorous bound, we account for three primary sources of uncertainty:
(i) \rev{\emph{data sparsity} (the available samples $x_{m,j}^i$ are distant from the query state $x$),}
(ii) \emph{disturbance uncertainty} (variation in \(d\)),
and (iii) \emph{measurement noise} \rev{($x_{m,j}^i \neq x^i_j$)}.

\noindent
\textbf{(i), (ii) Data sparsity and disturbance.}
The dataset is finite and does not cover the entire domain; in particular, the true state \(x_j^i\) associated with the sample \(x_{m,j}^i\) may not coincide with the query state \(x\) at which the Hamiltonian is evaluated. Moreover, the realized disturbance during data collection may take any value in \(\mathcal{D}\). By the uniform Lipschitz bound \eqref{eq:spherical_lipschitz}, for \(u_j^i \in \mathcal{U}\) and \(d, d_j^i \in \mathcal{D}\), we obtain:
\begin{equation} \label{eq: uniformIII}
    \|\tilde{v}_j^i - v_j^i\|
    \;\le\; L^x \|x - x_j^i\| \;+\; L^d \|d - d_j^i\|,
\end{equation}
where \(x_j^i\) and \(v_j^i\) are the (unknown) true state and state velocity at index \((i,j)\).
If the disturbance set is an \(\ell_2\)-ball, \(\mathcal{D}=\{d:\|d\|\le \bar{d}\}\), then \(\sup_{d,\, d_j^i \in \mathcal{D}}\|d-d_j^i\| \le 2\bar{d}\), and hence
\begin{equation}\label{eq: disturbance inq}
    \|\tilde{v}_j^i - v_j^i\|
    \;\le\; L^x \|x - x_j^i\| \;+\; 2 L^d \bar{d}.
\end{equation}
\noindent
\textbf{(iii) Measurement noise.}
The true state \(x_j^i\) is not observed directly; instead we measure
\(x_{m,j}^i = x_j^i + n_{x,j}^i\) and \(v_{m,j}^i = v_j^i + n_{v,j}^i\),
with
\begin{equation*}
\begin{aligned}
n_{x,j}^i \in \mathcal{N}_x := \{n:\|n\|\le \sigma_x\}, \,
n_{v,j}^i \in \mathcal{N}_v := \{n:\|n\|\le \sigma_v\}.
\end{aligned}
\end{equation*}
Hence,
\begin{equation}\label{eq:x}
\begin{aligned}
&\|x - x_j^i\|
= \|x - (x_{m,j}^i - n_{x,j}^i)\| \\
&\le \|x - x_{m,j}^i\| + \|n_{x,j}^i\| \le \|x - x_{m,j}^i\| + \sigma_x.
\end{aligned}
\end{equation}
similarly,
\begin{equation}
\begin{aligned}\label{eq:v}
&\|\tilde{v}_j^i - v_j^i\|
= \|\tilde{v}_j^i - (v_{m,j}^i - n_{v,j}^i)\| \\
&\le \|\tilde{v}_j^i - v_{m,j}^i\| + \|n_{v,j}^i\| \le \|\tilde{v}_j^i - v_{m,j}^i\| + \sigma_v.
\end{aligned}
\end{equation}
Combining \eqref{eq:x} and \eqref{eq:v} with \eqref{eq: uniformIII} yields
\begin{equation}\label{eq:uncertain1}
    \|\tilde{v}_j^i - v_{m,j}^i\|
    \;\le\; L^x\big(\|x - x_{m,j}^i\| + \sigma_x\big) \;+\; 2L^d\bar d \;+\; \sigma_v.
\end{equation}
Consequently, {for any disturbance \(d\in\mathcal D\), the state velocity at \(x\) under input \(u_j^i\), lies within the following \(\ell_2\)-ball:}
\begin{equation}\label{eq:C-set(1)}
\boxed{\tilde v_j^i \in \mathcal{B}\!\left(
    v_{m,j}^i,\,
    L^x \big( \|x - x_{m,j}^i\| + \sigma_x \big) + 2 L^d \bar{d} + \sigma_v
\right).}
\end{equation}

\subsection{A Practical Guide to State Velocity Estimation} \label{sec: practical}

\rr{Compared to~\cite{choi2025data}, we consider a practically motivated setting in which the dataset~\eqref{eq:dataset} contains only input--state measurements and the state velocity is not directly observed. In this case, the state velocity must be reconstructed from successive state measurements. We therefore extend our analysis to explicitly account for this estimation-induced uncertainty, as described below.}

\noindent
\textbf{State velocity estimation error.}
The true state velocity \(v_j^i\) is not directly observed. Instead, we approximate it from noisy state measurements via the finite-difference estimate
\begin{equation}\label{eq:fd_est}
    \hat v_{m,j}^i \;:=\; \frac{x_{m,j}^{i+1} - x_{m,j}^{i}}{T_{\text{s}}}.
\end{equation}
Assume the inputs \((u_j^i,d_j^i)\) remain constant over the sampling interval \(T_{\text{s}}\), and the true trajectory satisfies \(\dot x(s)=f(x(s),u_j^i,d_j^i)\) with the initial condition \(x_j^i\). A second-order Taylor expansion at time \(s\) yields
\begin{align*}\label{eq:Taylor}
    &x_{j}^{i+1}
    \;=\; x_{j}^{i}
    \;+\; f(x_j^i,u_j^i,d_j^i)\,T_{\text{s}}\\&
    \;+\; \tfrac{1}{2}\,\nabla_x f(x_j^i,u_j^i,d_j^i)\,f(x_j^i,u_j^i,d_j^i)\,T_{\text{s}}^2
    \;+\; \mathcal{O}(T_{\text{s}}^3).
\end{align*}
where $\mathcal{O}(T_{\text{s}}^3)$ denotes higher-order error terms. \rev{By substituting into \eqref{eq:fd_est} and accounting for measurement noise, we obtain}
\begin{align*}
     \hat v_{m,j}^i
    \;=\; & \frac{(x_{j}^{i+1}+n_{j}^{i+1})-(x_{j}^{i}+n_{j}^{i})}{T_{\text{s}}}
   \\  \;=\;&  f(x_j^i,u_j^i,d_j^i)
    \;+\; \frac{1}{2}\,\nabla_x f(x_j^i,u_j^i,d_j^i)\,f(x_j^i,u_j^i,d_j^i)\,T_{\text{s}}
    \\\;+\; & \frac{n_{j}^{i+1}-n_{j}^{i}}{T_{\text{s}}}
    \;+\; \mathcal{O}(T_{\text{s}}^2).
\end{align*}
Hence, the total estimation error is bounded by
\begin{align}
&\big\|\hat v_{m,j}^i - v_{j}^i\big\|
\le \tfrac{1}{2}\big\|\nabla_x f(x_j^i,u_j^i,d_j^i)\big\|\,\big\|f(x_j^i,u_j^i,d_j^i)\big\|\,T_{\text{s}} \\&
    + \frac{\|n_{j}^{i+1}-n_{j}^{i}\|}{T_{\text{s}}}
    + \mathcal{O}(T_{\text{s}}^2). \nonumber
\end{align}
\rev{Assuming that $\|f(x,u,d)\|\le M$ and, from \eqref{eq:spherical_lipschitz}, $\|\nabla_x f(x,u,d)\|\le L^x$, it follows that}
\begin{equation}\label{eq:taylor}
    \big\|\hat v_{m,j}^i - v_{j}^i\big\|
    \;\le\; \frac{1}{2}L^x M\,T_{\text{s}} \;+\; \frac{2\sigma_x}{T_{\text{s}}} \;+\; \mathcal{O}(T_{\text{s}}^2).
\end{equation}
\rev{Combining \eqref{eq:uncertain1} and \eqref{eq:taylor} via the triangle inequality and assuming $\mathcal{O}(T_{\text{s}}^2)\approx0$, it results in}
\begin{equation}\label{eq:setcombined}
\begin{aligned}
    &\big\|\tilde v_j^i - \hat v_{m,j}^i\big\|
    \le L^x\big(\|x - x_{m,j}^i\|+\sigma_x\big) \;+\; 2L^d \bar d \; \\& +\; \frac{1}{2}L^x M\,T_{\text{s}} \;+\; \frac{2\sigma_x}{T_{\text{s}}}.
\end{aligned}
\end{equation}

Consequently, the state velocity $\tilde v_j^i$ at \(x\) under input \(u_j^i\) and disturbance \(d_j^i\) lies in an \(\ell_2\)-ball:
\begin{equation}\label{eq:C-set2}
\boxed{%
\begin{aligned}
\mathcal{B}\Big(
    \hat v_{m,j}^i,\;
    &L^x \big(\|x - x_{m,j}^i\| + \sigma_x\big) + 2L^d \bar d \\
    &\quad + \frac{1}{2} L^x M T_{\text{s}} + \frac{2\sigma_x}{T_{\text{s}}}
\Big),
\end{aligned}}
\end{equation}


\begin{remark}
Comparing \eqref{eq:C-set(1)} and \eqref{eq:C-set2} shows that, when \(v_{m,j}^i\) is replaced by the finite-difference estimate \(\hat v_{m,j}^i\), the velocity-uncertainty bound can be defined as follows,
\[
\sigma_v := \frac{1}{2}L^x M T_{\mathrm{s}} + \frac{2\sigma_x}{T_{\mathrm{s}}}.
\]
Therefore, in the practical setting where the state velocity is estimated from noisy state measurements, it is sufficient to choose
\begin{equation}
\label{eq:sigma_v_condition}
\sigma_v \geq \frac{1}{2} L^x M T_{\mathrm{s}} + \frac{2\sigma_x}{T_{\mathrm{s}}}.
\end{equation}
This ensures that the finite-difference velocity estimate is contained in the prescribed uncertainty set. Hence, the robust DDH remains a valid lower bound on the exact Hamiltonian. This condition can be satisfied by overestimating the Lipschitz constant, the bound on \(f\), and the state-measurement noise.
\end{remark}

If the sampling period \(T_{\text{s}}\) can \rev{be chosen by designer} (e.g., via down sampling), the optimal value for sampling period $T_{\text{s}}^\star$ becomes:
\begin{equation}\label{eq:Ts_opt}
    T_{\text{s}}^\star \;=\; \arg\min_{\,T_{\min}<T_{\text{s}}\le T_{\max}} \left\{ \tfrac{1}{2}L^x M\,T_{\text{s}} \;+\; \frac{2\sigma_x}{T_{\text{s}}} \right\},
\end{equation}
where $T_{\min}$ and $T_{\max}$ denote the minimum and maximum user-defined sampling period. The minimizer is as follows,
\begin{equation}
    T_{\text{s}}^\star = \max\big(T_{\min},\min(2\sqrt{\sigma_x/(L^x M)},\,T_{\max})\big).
\end{equation}
If $T_{\min} \leq2\sqrt{\sigma_x/(L^x M)} \leq T_{\max}$, substituting \(T_{\text{s}}^\star\) into \eqref{eq:setcombined} yields
\begin{equation}
\begin{aligned}
\big\|\tilde v_j^i - \hat v_{m,j}^i\big\|
\le L^x\big(\|x - x_{m,j}^i\|&+\sigma_x\big) 
+ 2L^d \bar d \\&+ 2\sqrt{L^x M\,\sigma_x} \ .
\end{aligned}
\end{equation}

It should be noted that the \emph{optimal} sampling period only jointly minimizes the effects of measurement noise and state velocity estimation errors. Since changing the sampling period \(T_{\mathrm{s}}\) changes the subsequent measured sample \(x_{m,j}^{i+1}\), it also affects the estimated velocity \(v_{m,j}^{i}\). Therefore, a more general approach would be to optimize the sampling period with respect to the Hamiltonian \(H(x,p)\). 
\begin{remark}
As the sampling period \(T_{\text{s}}\) increases, higher-order truncation terms in the Taylor expansion can no longer be neglected. 
Accurate modeling in such settings requires incorporating \rev{higher-order terms}.
\end{remark}
\begin{remark}
The optimal sampling period in~\eqref{eq:Ts_opt} is used only for estimating the state velocity. In practice, the user may sample the system at a much finer rate to collect a larger dataset, while still using \(T_s^\star\) for the velocity estimation. For instance, one may record states every \(\frac{T_s}{10}\) and use \((x_m^{i},\, x_m^{i+10})\) for the velocity estimate. However, for simplicity, the formulation here is presented using two consecutive measurements separated by the optimal sampling period \(T_s^\star\).
\end{remark}

\subsection{Uncertainty representation of the state velocity for a single data point using the component-wise Lipschitz constant} \label{sec: component-wise results}

\textcolor{black}{Since the component-wise Lipschitz continuity changes only the geometry of the bound, the derivation parallels the uniform case; hence, the intermediate steps are omitted for brevity.}
Specializing the proof of Section~\ref{sec: uniform results} to the rectangular bound~\eqref{eq:rectangular_lipschitz}, and adopting
$\mathcal{D}=\{ d\in\mathbb{R}^p:\, |d|\le \bar d \}$ with
\begin{equation*}
\begin{aligned}
n_{x,j}^i \in \mathcal{N}_x := \{n:|n|\le \sigma_x\}, \,
n_{v,j}^i \in \mathcal{N}_v := \{n:|n|\le \sigma_v\}.
\end{aligned}
\end{equation*}
We obtain
\begin{align}\label{eq:square-final}
    \big| \tilde{v}_j^i - v_{m,j}^i \big|
    \le L^{x}_\mathrm{c} \big( |x - x_{m,j}^i| + \sigma_x \big) + 2 L^{d}_\mathrm{c} \bar{d} + \sigma_v,
\end{align}
and the uncertainty set is the axis-aligned hyper rectangle,
\begin{equation}\label{eq:R-set(1)}
\boxed{\mathcal{R}\!\left(
    v_{m,j}^i,\,
    L^{x}_\mathrm{c} \big( |x - x_{m,j}^i| + \sigma_x \big) + 2 L^{d}_\mathrm{c} \bar{d} + \sigma_v
\right).}
\end{equation}
Similarly, when the state velocities are not measured, as explained in Section~\ref{sec: practical}, and under the assumption of a component-wise representation, we have:
\begin{equation}\label{eq:R-set2}
\boxed{%
\begin{aligned}
\mathcal{R}\Big(
    \hat v_{m,j}^i,\;
    &L^{x}_{c}\big(|x - x_{m,j}^i| + \sigma_x\big) + 2L^{d}_{c} \bar d \\
    &\quad + \frac{1}{2} L^x_c M T_{\text{s}} + \frac{2\sigma_x}{T_{\text{s}}}
\Big).
\end{aligned}}
\end{equation}



\begin{figure*}
    \centering
    \includegraphics[width=1\linewidth, trim={0cm 4cm 0cm 3cm}, clip]{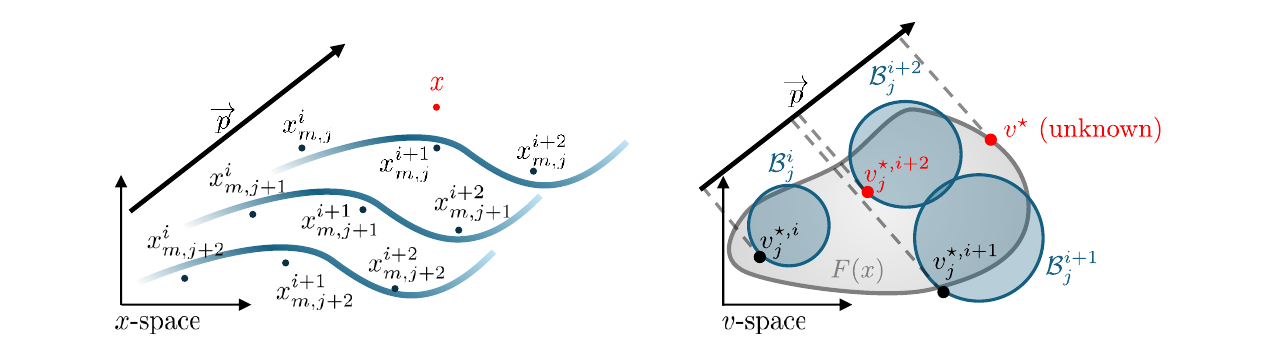}
    \caption{Visualization of the R-DDH framework. \textbf{(Left)} Three trajectories, indexed by \(j\), \(j+1\), and \(j+2\), are shown together with their sampled noisy states at indices \(i\), \(i+1\), and \(i+2\). The blue curves represent the true continuous-time trajectories, while the black markers (e.g. \(x_{m,j}^i\)) denote noisy state measurements, which may not lie on the corresponding trajectories because of measurement noise. The red point \(x\) denotes an arbitrary query state, and the vector \(\vec{p}\) indicates the costate direction. \textbf{(Right)} The gray region \(F(x)\) represents the set of all admissible state velocities at the query state \(x\). The blue circles denote the uncertainty sets associated with individual data points in the $j$-th trajectory, whose centers and radii are computed according to \eqref{eq:C-set2}, and are denoted here by \(\mathcal B_j^i\), \(\mathcal B_j^{i+1}\), and \(\mathcal B_j^{i+2}\). From the model-based perspective, the optimal state velocity is denoted by \(v^\star\), which is unknown. For each data point, the corresponding worst-case estimate is denoted by \(v_j^{\star,i}\), \(v_j^{\star,i+1}\), and \(v_j^{\star,i+2}\), defined as the point in the associated uncertainty set lying in the direction opposite to \(\vec{p}\). Among all such candidates, R-DDH selects the one that yields the largest value of \(\hat{H}(x,p)\). In this illustration, the selected point is \(v_j^{\star,i+2}\).}
    \label{fig:placeholder}
\end{figure*}
\subsection{Robust Data-Driven Hamiltonian}

\rev{Under the worst-case uncertainty, R\text{-}DDH selects the sample from the dataset \eqref{eq:dataset} that maximizes the Hamiltonian at the query state \(x\).} \rr{ This guarantees $\widehat{H}(x,p)\leq H(x,p)$} with respect to measurement error, exogenous disturbances, and state velocity estimation error, see Proposition \ref{proposition: H and H hat}. \rev{Accordingly, the set of all admissible state velocities at a given state \(x\), and a control input \(u\) under all feasible disturbances $d$ becomes:
\begin{equation}
    \dot{\mathbf  x}(s) \in F(\mathbf x(s),\mathbf  u(s)),
\end{equation}
where,
\begin{equation}
    F(x,u) := \{\, f(x,u,d) \mid d \in \mathcal{D} \,\}.
\end{equation}
Thus, the Hamiltonian in \eqref{eq:hamiltonian} can be written as
\begin{equation}\label{eq:Hamiltonian-VFB}
    H(x,p)
    \;=\;
    \max_{u \in \mathcal{U}} \; \min_{v \in F(x,u)} \; p^\top v.
\end{equation}}

\rev{Although \eqref{eq:Hamiltonian-VFB} removes explicit dependence on the model \eqref{eq:system}, it cannot be evaluated directly when the dynamics are unknown. Therefore, we estimate (or measure) the state velocity from the sampled dataset and construct a certified under-approximation of the exact Hamiltonian.} Considering the measured velocity\footnote{If the state velocity is not directly measurable, replace \(v_{m,j}^i\) with its estimate \(\hat v_{m,j}^i\) and use the uncertainty sets in \eqref{eq:C-set2} or \eqref{eq:R-set2}.}  \(v^{i}_{m,j}\), the uncertainty sets \eqref{eq:C-set(1)} or \eqref{eq:R-set(1)} provide \rev{a set of possible} \(\tilde{v}^i_j\), given state \(x\), control \(u_j^i\), and arbitrary disturbance \(d\). Formally,
\begin{equation}
    \tilde v^{i}_j \in \mathcal{E}\big(x,\mathfrak{D}^{i}_{j}\big),
\end{equation}
where \(\mathcal{E}\big(x,\mathfrak{D}^{i}_{j}\big)\) denotes the uncertainty set, either \eqref{eq:C-set(1)} or \eqref{eq:R-set(1)}, associated with the data point \(\mathfrak{D}^{i}_{j}\).

The robust data-driven Hamiltonian is then defined by minimizing the linear objective \(p^\top v\) over each data point’s uncertainty set in the dataset \eqref{eq:dataset}, and then maximizing over all data points:
\begin{equation}\label{eq:DDH}
  \widehat{H}(x,p)
  := \max_{(i,j)\in \mathcal{I}}\;
     \min_{v\in\mathcal{E}(x,\mathfrak{D}^{i}_{j})}
     p^\top v.
\end{equation}
The inner minimization captures the worst-case realization of the uncertain state velocity at state \(x\) under input \(u_j^i\) and arbitrary disturbance \(d\), while the outer maximization identifies the data point that yields the largest Hamiltonian estimate despite this worst-case uncertainty. Fig. 1 provides a visual illustration of the R-DDH framework for a small dataset.

The inner minimization in \eqref{eq:DDH} admits closed-form solutions for the hyper-sphere and hyper-rectangle uncertainty sets defined in \eqref{eq:C-set(1)} and \eqref{eq:R-set(1)}, respectively \cite{choi2025data}. For the hyper-sphere case,
\begin{equation}
\label{eq:ball-value}
\min_{v\in \mathcal B(v^{i}_{m,j},r^{i}_{j})}
     p^\top v = p^\top (v^{i}_{m,j} - \frac{p}{||p||}r^{i}_{j}(x)),
\end{equation}
where radius $r^i_j(x)$ is given in \eqref{eq:C-set(1)}, and for the hyper-rectangle case,
\begin{equation}
\label{eq:rect-value}
\min_{v\in \mathcal R(v^{i}_{m,j},l^{i}_{j})}
     p^\top v = p^\top (v^{i}_{m,j} - \operatorname{sgn}(p)\odot l^{i}_{j}(x)) ,
\end{equation}
where the half-length $l^i_j(x)$ is given in \eqref{eq:R-set(1)}. 


\begin{proposition}[Lower bound on the exact Hamiltonian] \label{proposition: H and H hat}
Let $\tilde{v}^{i}_j \in \mathcal{E}(x,\mathfrak{D}^{i}_{j})$. Then
\begin{equation*}
    \widehat{H}(x,p) \;\le\; H(x,p).
    \label{eq:ddh_lower_bound}
\end{equation*}
\end{proposition}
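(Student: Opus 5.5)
The plan is to compare $\widehat{H}$ and $H$ data point by data point. The key fact is that the uncertainty set $\mathcal{E}(x,\mathfrak{D}^i_j)$ contains the \emph{entire} set of admissible velocities $F(x,u_j^i)$ at the query state under the recorded input. Here $F$ is the vector field bound introduced before \eqref{eq:Hamiltonian-VFB}. Once this containment holds, each inner minimum in \eqref{eq:DDH} is at most the corresponding inner minimum in \eqref{eq:Hamiltonian-VFB}. Each such term is in turn at most the maximum over $u\in\mathcal U$.

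\textbf{Step 1 (containment).} Fix $(i,j)\in\mathcal I$ and an arbitrary $d\in\mathcal D$, and set $\tilde v = f(x,u_j^i,d)$. The derivation leading to \eqref{eq:uncertain1} proceeds in three moves: the Lipschitz bound \eqref{eq: uniformIII}, the disturbance diameter bound $\|d-d_j^i\|\le 2\bar d$, and the noise bounds \eqref{eq:x}--\eqref{eq:v}. None of these uses any property of $d$ other than $d\in\mathcal D$. Hence $\tilde v\in\mathcal E(x,\mathfrak D^i_j)$ for every $d\in\mathcal D$, that is, $F(x,u_j^i)\subseteq \mathcal E(x,\mathfrak D^i_j)$. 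This is exactly the hypothesis $\tilde v^i_j\in\mathcal E(x,\mathfrak D^i_j)$ of the proposition, read uniformly in $d$.

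The component-wise sets \eqref{eq:R-set(1)} are handled by the same argument using \eqref{eq:rectangular_lipschitz}. The finite-difference sets \eqref{eq:C-set2} and \eqref{eq:R-set2} are handled the same way under condition \eqref{eq:sigma_v_condition}.

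\textbf{Step 2 (chain of inequalities).} For each $(i,j)$, minimizing a linear function over a larger set can only give a smaller value, so
\[
\min_{v\in\mathcal E(x,\mathfrak D^i_j)} p^\top v \;\le\; \min_{v\in F(x,u_j^i)} p^\top v \;=\; \min_{d\in\mathcal D} p^\top f(x,u_j^i,d).
\]
Since $u_j^i\in\mathcal U$ by \eqref{eq:constraints}, the right-hand side is at most $\max_{u\in\mathcal U}\min_{d\in\mathcal D}p^\top f(x,u,d)=H(x,p)$ by \eqref{eq:hamiltonian}. The bound $H(x,p)$ does not depend on $(i,j)$, so taking the maximum over the finite index set $\mathcal I$ preserves the inequality and gives $\widehat H(x,p)\le H(x,p)$.

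\textbf{Main obstacle.} The delicate step is Step 1, specifically making sure the containment holds for \emph{all} $d\in\mathcal D$ rather than only the realized one. This requires checking that the $2L^d\bar d$ term covers the worst-case pair $(d,d_j^i)$. It also requires that the unknown noise realizations enter only through their norm bounds $\sigma_x,\sigma_v$, so that the set is computable from data yet still valid.

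A minor technical point is existence of the minima and maxima. $\mathcal U$ and $\mathcal D$ are compact and $f$ is continuous in $d$, so $F(x,u)$ is compact. If needed, all minima and maxima can be replaced by infima and suprema without changing the argument.
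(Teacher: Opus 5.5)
Your proof is correct and follows the paper's argument essentially step for step: the containment $F(x,u_j^i)\subseteq\mathcal{E}(x,\mathfrak{D}^i_j)$, monotonicity of the inner minimum under set enlargement, and $u_j^i\in\mathcal{U}$ to bound the outer maximum by $H(x,p)$. The differences are cosmetic: you justify the containment uniformly in $d\in\mathcal{D}$ by re-tracing the derivation of \eqref{eq:uncertain1}, whereas the paper simply asserts it holds by construction, and you bound each term by $H$ before maximizing rather than passing through $\max_{(i,j)}\min_{v\in F(x,u_j^i)}p^\top v$.
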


\begin{proof}
The exact Hamiltonian is defined as $H(x,p)
=
\max_{u\in\mathcal{U}}
\min_{v\in F(x,u)}
p^\top v$, which corresponds to choosing the optimal control input against the worst-case disturbance. In contrast, the data-driven surrogate $\widehat{H}(x,p)
=
\max_{(i,j)\in\mathcal{I}}
\min_{v\in\mathcal{E}(x,\mathfrak{D}^{i}_{j})}
p^\top v$
optimizes the same inner product, but only through data-induced uncertainty sets.

For each data point \(\mathfrak{D}^{i}_{j}\), the uncertainty set \(\mathcal{E}(x,\mathfrak{D}^{i}_{j})\) is constructed so that it contains all possible state velocities at the query state \(x\) under the sampled input \(u_j^i\) and all admissible disturbances. Therefore,
$
F(x,u_j^i) \subseteq \mathcal{E}(x,\mathfrak{D}^{i}_{j})
$ for $d \in \mathcal D$.
It follows that
\[
\min_{v\in\mathcal{E}(x,\mathfrak{D}^{i}_{j})} p^\top v
\le
\min_{v\in F(x,u_j^i)} p^\top v .
\]
Taking the maximum over all data points gives
\[
\max_{(i,j)\in\mathcal{I}}
\min_{v\in\mathcal{E}(x,\mathfrak{D}^{i}_{j})} p^\top v
\le
\max_{(i,j)\in\mathcal{I}}
\min_{v\in F(x,u_j^i)} p^\top v .
\]
Since each sampled input satisfies \(u_j^i\in\mathcal{U}\), the sampled inputs form a subset of the admissible input set. Therefore,
\[
\max_{(i,j)\in\mathcal{I}}
\min_{v\in F(x,u_j^i)} p^\top v
\le
\max_{u\in\mathcal{U}}
\min_{v\in F(x,u)} p^\top v.
\]
Hence,
\[
\widehat{H}(x,p)\le H(x,p).
\]
\end{proof}

\begin{proposition} \label{Conservative value functions and safe sets}
Let \(V(\cdot)\) and \(\widehat V(\cdot)\) denote the viscosity solutions of the HJ-VI \eqref{eq:avoid_brt_hjvi} with the exact Hamiltonian \(H\) and the robust data-driven Hamiltonian \(\widehat H\), respectively, under identical terminal conditions. Then
\[
\widehat V(x,t)\le V(x,t)\quad \forall (x,t),
\]
and consequently the safe sets satisfy
\[
\widehat S(t):=\{x:\widehat V(x,t)\ge 0\}\ \subseteq\ S(t):=\{x:V(x,t)\ge 0\}.
\]
\end{proposition}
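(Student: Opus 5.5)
The argument is a comparison principle for viscosity solutions of the HJ-VI \eqref{eq:avoid_brt_hjvi}, fed by the pointwise Hamiltonian ordering of Proposition~\ref{proposition: H and H hat}.

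\textbf{Step 1: $\widehat V$ is a subsolution of the exact equation.} At a point $(x,t)$ and a smooth test function $\phi$ touching $\widehat V$ from above, the subsolution property of $\widehat V$ for its own equation gives
\[
\min\{g(x)-\widehat V,\ -\partial_t\phi+\widehat H(x,\nabla_x\phi)\}\ge 0 .
\]
Since $\widehat H\le H$ for every $(x,p)$, we have
\[
-\partial_t\phi+H(x,\nabla_x\phi)\ \ge\ -\partial_t\phi+\widehat H(x,\nabla_x\phi)\ \ge 0 ,
\]
while the obstacle term $g(x)-\widehat V$ is unchanged. Hence
\[
\min\{g-\widehat V,\ -\partial_t\phi+H(x,\nabla_x\phi)\}\ge 0 ,
\]
so $\widehat V$ is a viscosity subsolution of the HJ-VI with the exact Hamiltonian $H$, with terminal data $\widehat V(\cdot,0)=g=V(\cdot,0)$. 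The sign convention matters here. With backward time $t$ and the $-\partial_t V+H$ form, a larger Hamiltonian makes the constraint easier to satisfy for a given function, so lowering $H$ can only lower the solution. I would double-check this by noting that the inequality goes the same way as the monotonicity of the operator in its Hamiltonian argument.

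\textbf{Step 2: comparison.} I would invoke the standard comparison principle for HJ-VIs of obstacle type (Bardi--Capuzzo-Dolcetta, and the framework of \cite{fisac2015reach}). It states that a bounded uniformly continuous subsolution lies below a supersolution when their terminal data are ordered, and the exact $V$ is both a sub- and supersolution. This yields $\widehat V\le V$ on $\mathbb{R}^n\times[0,t]$.

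\textbf{Main obstacle: hypotheses of the comparison principle.} The principle needs a modulus-of-continuity condition on the Hamiltonian in $x$ and Lipschitz dependence in $p$. For $H$ these follow from the Lipschitz continuity of $f$ and compactness of $\mathcal U$ and $\mathcal D$. Conveniently, only the Hamiltonian $H$ of the reference equation needs these properties, since $\widehat V$ enters merely as a subsolution; the regularity of $\widehat H$ is used only to guarantee that $\widehat V$ exists as a continuous viscosity solution.

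As a sanity check, $\widehat H$ is itself Lipschitz in $p$, with constant bounded by $\max\|v_{m,j}^i\|$ plus the uncertainty radius. It is also continuous in $x$, because the radii in \eqref{eq:C-set(1)} and \eqref{eq:R-set(1)} are Lipschitz in $x$ and a finite max--min of continuous functions is continuous.

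An alternative route, if a game interpretation is preferred, is to view $\widehat H$ as the Hamiltonian of a differential inclusion in which the control chooses a data index and the disturbance chooses $v\in\mathcal E$. This inclusion's velocity sets are dominated by those of the true game, but making that rigorous is more delicate, so I would use the PDE comparison.

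\textbf{Step 3: set inclusion.} The inclusion of safe sets is immediate: if $\widehat V(x,t)\ge 0$ then $V(x,t)\ge\widehat V(x,t)\ge 0$, so $\widehat S(t)\subseteq S(t)$.
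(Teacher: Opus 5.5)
Your argument is correct, and you handle the sign convention properly. Read \eqref{eq:avoid_brt_hjvi} in the Crandall--Lions sense as $\max\{V-g,\ \partial_t V-H(x,\nabla_x V)\}=0$. A subsolution then satisfies exactly the inequality you wrote. So $\widehat H\le H$ makes $\widehat V$ a subsolution of the exact equation, and comparison with the common data $g$ gives $\widehat V\le V$.

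The paper does not carry out this argument. Its proof simply invokes Theorems~1 and~2 of \cite{choi2025data} for the value-function ordering and the set inclusion. The citation is shorter, but the paper itself notes that \cite{choi2025data} treats the disturbance-free setting, with Hamiltonian $\max_u p^\top f(x,u)$. Applying those theorems here therefore relies on their proofs depending only on the pointwise ordering of Hamiltonians and on regularity of the reference Hamiltonian, rather than on the specific nominal structure.

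Your self-contained route makes this transfer explicit. It uses only three ingredients:
\begin{itemize}
\item the ordering from Proposition~\ref{proposition: H and H hat};
\item the structural Lipschitz-type conditions on the exact max--min Hamiltonian \eqref{eq:hamiltonian}, which follow from Lipschitz $f$ and compact $\mathcal U,\mathcal D$ whether or not disturbances are present;
\item enough regularity of $\widehat V$ (boundedness and continuity, as in your sanity check on $\widehat H$) for it to be admissible in the comparison principle.
\end{itemize}
The set inclusion is the same one-line consequence in both approaches, so a separate theorem is not needed for it.
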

\vspace{0.5em}
\begin{proof}
The inequality $\widehat V(x,t)\le V(x,t)$ for all $(x,t)$ follows directly from Theorem~1 in \cite{choi2025data}, and the inclusion $\widehat S(t) \subseteq S(t)$ follows directly from Theorem~2 in \cite{choi2025data}.
\end{proof}

\section{Model-Based vs Data-Driven Hamiltonian} \label{sec: Model-Based vs Data-Driven Hamiltonian}

In this section, \rev{we treat the model-based HJ reachability solution as the \emph{ideal benchmark}, since it assumes exact knowledge of the system dynamics.} By contrast, the proposed direct data-driven method bypasses modeling and operates on raw measurements, so its \rev{conservatism} depends on dataset size and sample quality. To assess the degree of conservatism induced by the proposed robustification, we quantify the Hamiltonian approximation gap,
\begin{equation} \label{eq: gap}
    \Delta H(x,p) := H(x,p) - \widehat H(x,p).
\end{equation}

For a given query state \(x\), \rev{using \eqref{eq:R-set(1)}} and \eqref{eq:rect-value}, we obtain:
\begin{align} \nonumber
& H(x,p) - \widehat{H}(x,p) 
= \max_{u \in \mathcal{U}} \min_{v \in F(x,u)} p^\top v 
- \max_{(i,j)\in \mathcal{I}}\, \min_{v \in \mathcal{R}(\cdot)} p^\top v \\
&= \max_{u \in \mathcal{U}} \min_{v \in F(x,u)} p^\top v 
- \max_{(i,j)\in \mathcal{I}}\, p^\top \left( v_{m,j}^i - {l^\star}^i_j(x) \right),
\label{eq:DDH-rect}
\end{align}
where
\begin{align} \label{eq:l}
{l^\star}^i_j(x) = \operatorname{sgn}(p) \odot  \big( L^{x}_{\mathrm{c}} \big|x - x_{m,j}^i\big| 
+ L^{x}_{\mathrm{c}} \sigma_x 
+ 2 L^{d}_{\mathrm{c}} \bar{d} 
+ \sigma_v \big) .
\end{align}
\rev{The terms \(L^{x}_{\mathrm{c}}\sigma_x\) and \(\sigma_v\) represent bias components that remain non-vanishing even as the number of data samples increases.} This persistence follows from the fact that the sets $\mathcal{N}_x$ and $\mathcal{N}_v$ impose deterministic bounds and capture a variety of perceptual uncertainties, such as measurement noise and sensor bias. The term $L^{x}_{\mathrm{c}}\lvert x - x_{m,j}^i\rvert$ captures approximation error due to data sparsity and diminishes as the dataset becomes denser; in particular, for any query state $x$, the presence of a nearby $x_{m,j}^i$ in the dataset reduces this term.

\rev{Crucially, although} the disturbance-related term $2 L^{d}_{\mathrm{c}} \bar{d}$ appears irreducible and would act as a bias even as more data are collected, Theorem~\ref{thm:ddh} shows that it can be reduced under appropriate assumptions on the system and data.

\begin{theorem}\label{thm:ddh}
Consider the system $\dot x = f(x,u) + d$ with disturbance set $\mathcal{D} := \{ d \in \mathbb{R}^n : |d| \le \bar d \}$, and noise sets $\mathcal{N}_x := \{n:|n|\le \sigma_x\},\,\mathcal{N}_v := \{n:|n|\le \sigma_v\}$. For any query state $x$ and costate $p \in \mathcal{B}(1) \subseteq \mathbb R^n$, consider
\begin{equation}\label{eq:HJB for linear d}
    H(x,p)
    \;=\;
    \max_{u \in \mathcal{U}}\;
    \min_{d \in \mathcal{D}}\;
    p^\top \big(f(x,u)+d\big),
\end{equation}
and let \eqref{eq:HJB for linear d} admit an optimizer $u^\star(x,p)$, with the minimizing disturbance
\[
d^\star(x,p) = -\bar d \odot \operatorname{sgn}(p).
\]
Also consider 
\begin{equation}\label{eq:HJB for linear d_proof}
    \widehat{H}(x,p) = \max_{(i,j)\in \mathcal{I}}\, p^\top ( v_{m,j}^i - {l^\star})
\end{equation}
and let the index \((i^\star,j^\star)\) be the optimal index in the dataset. Assume further that:  
(i) for every query state \(x\) and for every arbitrarily small \(\varepsilon > 0\), there exists a sample \(x_{m,j^\star}^{i^\star}\) in the dataset such that \(\| x - x_{m,j^\star}^{i^\star} \| \le \varepsilon\), and  
(ii) for every pair $(x,p)$, the same data point contains the maximizing control $u^\star(x,p)$ and the disturbance $-d^\star(x,p)$. Then the gap between the model-based solution in \eqref{eq:DDH-rect} and the data-driven solution satisfies
\begin{equation}
    H(x,p)-\widehat{H}(x,p)
    =
    |p|^\top (L^{x}_{\mathrm{c}} \sigma_x + \sigma_v).
\end{equation}
\end{theorem}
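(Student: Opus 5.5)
The plan is to compute both Hamiltonians explicitly at the single data point supplied by assumptions (i)–(ii), and then compare them term by term.

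\textbf{Step 1: the exact Hamiltonian.} For the additive model $\dot x=f(x,u)+d$ with $\mathcal D=\mathcal R(\bar d)$, the inner minimum in \eqref{eq:HJB for linear d} is separable. It equals $p^\top f(x,u)-|p|^\top\bar d$, attained at $d^\star=-\bar d\odot\operatorname{sgn}(p)$. Hence $H(x,p)=p^\top f(x,u^\star)-|p|^\top\bar d$.

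\textbf{Step 2: the data-driven term at the distinguished sample.} Since the disturbance enters additively, the component-wise Lipschitz matrix is $L^d_{\mathrm c}=I$. By \eqref{eq:rect-value} and \eqref{eq:l}, the term in \eqref{eq:HJB for linear d_proof} for sample $(i^\star,j^\star)$ is
\[
p^\top v^{i^\star}_{m,j^\star}-|p|^\top\big(L^x_{\mathrm c}|x-x^{i^\star}_{m,j^\star}|+L^x_{\mathrm c}\sigma_x+2\bar d+\sigma_v\big),
\]
using $p^\top(\operatorname{sgn}(p)\odot a)=|p|^\top a$. By assumption (ii), this sample was generated with $u^\star$ and disturbance $-d^\star=\bar d\odot\operatorname{sgn}(p)$. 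Therefore
\[
v^{i^\star}_{m,j^\star}=f(x^{i^\star}_{j^\star},u^\star)+\bar d\odot\operatorname{sgn}(p)+n_v ,
\qquad
p^\top v^{i^\star}_{m,j^\star}=p^\top f(x^{i^\star}_{j^\star},u^\star)+|p|^\top\bar d+p^\top n_v .
\]
The key cancellation is then visible: $+|p|^\top\bar d$ from the realized data disturbance, minus $2|p|^\top\bar d$ from $l^\star$, gives exactly $-|p|^\top\bar d$. This matches the disturbance contribution in $H$, so the disturbance gap vanishes.

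\textbf{Step 3: sparsity term and assembly.} By assumption (i), $|x-x^{i^\star}_{m,j^\star}|\le\varepsilon$ for arbitrary $\varepsilon>0$, so the term $L^x_{\mathrm c}|x-x_{m}|$ can be driven to zero. Subtracting Step 2 from Step 1 gives
\[
H-\widehat H=p^\top\big(f(x,u^\star)-f(x^{i^\star}_{j^\star},u^\star)\big)-p^\top n_v+|p|^\top(L^x_{\mathrm c}\sigma_x+\sigma_v)+O(\varepsilon).
\]
Because $(i^\star,j^\star)$ is the optimal index, $\widehat H$ equals this sample's value and not merely exceeds it. Proposition~\ref{proposition: H and H hat} guarantees the left-hand side is nonnegative, which serves as a consistency check.

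\textbf{Main obstacle.} The delicate step is the pair of residual terms $p^\top(f(x,u^\star)-f(x_j,u^\star))$ and $-p^\top n_v$. Taking $\varepsilon\to0$ leaves $x^{i^\star}_{j^\star}=x^{i^\star}_{m,j^\star}-n_x$, which differs from $x$ only by the noise. So these terms are bounded in magnitude by $|p|^\top(L^x_{\mathrm c}\sigma_x+\sigma_v)$ but are not identically zero.

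\begin{itemize}
\item I would first try to read the stated identity as the gap attributable to the construction of $\widehat H$, i.e.\ the deterministic bias $|p|^\top(L^x_{\mathrm c}\sigma_x+\sigma_v)$. This holds exactly when the realized noise at the selected sample is zero ($x^{i^\star}_{j^\star}=x^{i^\star}_{m,j^\star}$, $n_v=0$), since then both residuals vanish in the limit.
\item Failing that, I would state the result as the two-sided bound
\[
0\le H-\widehat H\le 2|p|^\top(L^x_{\mathrm c}\sigma_x+\sigma_v)+O(\varepsilon),
\]
with the displayed expression in Theorem~\ref{thm:ddh} being the noise-free-realization value.
\end{itemize}

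Either way, the conclusion emphasized before the theorem holds: the $2L^d_{\mathrm c}\bar d$ term cancels completely.
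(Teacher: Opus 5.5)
Your argument follows the paper's proof: evaluate $H=p^\top f(x,u^\star)-|p|^\top\bar d$ in closed form, evaluate the rectangle term at the sample supplied by (i)--(ii), and cancel the realized $+|p|^\top\bar d$ against the $2|p|^\top\bar d$ in $l^\star$. The obstacle you flag is real, and the paper does not address it. Its proof simply writes $v_{m,j^\star}^{i^\star}=f(x,u^\star)-d^\star$, which tacitly takes $x_{j^\star}^{i^\star}=x$ and $n_x=n_v=0$ at the selected sample. That is exactly your first reading, and under it the stated identity holds exactly. For general noise realizations the identity fails. For example, with $f(x,u)=u$, $\mathcal U=[-1,1]$ and $p=1$, the selected sample ($u=1$, $d=\bar d$) gives $H-\widehat H=\sigma_v-n_{v,j^\star}^{i^\star}$, which can be anywhere in $[0,2\sigma_v]$. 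So your two-sided bound $0\le H-\widehat H\le 2|p|^\top(L^x_{\mathrm c}\sigma_x+\sigma_v)$ is the correct general statement, and the theorem's displayed value is the zero-realized-noise case.
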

\vspace{1em}
\begin{proof}
Assumption (i) implies that, for any query state \(x\), there exists a sample \(x_{m,j^\star}^{i^\star}\) such that \(x_{m,j^\star}^{i^\star} \approx x\). Hence, the uncertainty set in \eqref{eq:l} reduces to
\[
{l^\star} = \operatorname{sgn}(p) \odot
\big(L^{x}_{\mathrm{c}} \sigma_x + \sigma_v + 2\,\bar d\big).
\]

Leveraging (ii), the data point $(i^\star,j^\star)$ with the optimal control input $u^\star$ and the corresponding disturbance $-d^\star$ is present in the dataset, where
\[
-d^\star = \bar d \odot \operatorname{sgn}(p).
\]
Therefore, from \eqref{eq:HJB for linear d_proof}, we obtain
\[
\begin{aligned}
\widehat H(x,p) 
&= \max_{(i,j)\in \mathcal{I}} \min_{v \in \mathcal R(\cdot)} p^\top v \\[4pt]
&= p^\top\big(f(x,u^\star) -d^\star - l^\star(x)\big) \\[4pt]
&= p^\top\big(f(x,u^\star) +\operatorname{sgn}(p)\odot\bar{d} 
\\ &- \operatorname{sgn}(p) \odot
(L^{x}_{\mathrm{c}} \sigma_x + \sigma_v + 2\,\bar d)\big) \\[4pt]
&= p^\top f(x,u^\star)
- |p|^\top \bar d
- |p|^\top (L^{x}_{\mathrm{c}} \sigma_x + \sigma_v).
\end{aligned}
\]
On the other hand, the model-based Hamiltonian satisfies
\[
\begin{aligned}
H(x,p) 
&= \max_{u \in \mathcal U}\min_{d \in \mathcal D} p^\top\big(f(x,u)+d\big) \\[4pt]
&= p^\top\big(f(x,u^\star)-\operatorname{sgn}(p) \odot \bar d\big) \\[4pt]
&= p^\top f(x,u^\star)-|p|^\top \bar d .
\end{aligned}
\]
Hence,
\begin{equation*}
    H(x,p)-\widehat{H}(x,p)=
|p|^\top (L^{x}_{\mathrm{c}} \sigma_x + \sigma_v).
\end{equation*}
\end{proof}

Informally, when the disturbance enters additively as $\dot{x}=f(x,u)+d$, the data-driven Hamiltonian approximation becomes tighter and the disturbance-related term $2L^{d}_{c}\bar d$ vanishes in the DDH formulation. Note that in the noise-free setting where $\sigma_x,\sigma_v=0$, \rev{even if all terms in \eqref{eq:l} vanish, exact recovery of the Hamiltonian, i.e., $H(x,p) = \widehat{H}(x,p)$, requires that at least one measured velocity $v_{m,j}^i$ coincide with the optimal control input and the disturbance $-d^\star(x,p)$.  
In other words, the optimal control input, which maximizes the Hamiltonian for a given query state $x$ at any time within the prescribed horizon, must be represented in the dataset.} In that case the data-driven value function satisfies $\widehat V(x,t)=V(x,t)$ and the computed safe set satisfies $\widehat{\mathcal{S}}(t)=\mathcal{S}(t)$. See Example~\ref{Number of Samples} for a numerical demonstration of this convergence behavior in the case $\bar d \neq 0$, where the DDH formulation nonetheless computes the exact robust safe set from noise-free data. 

\textcolor{black}{\begin{remark}
    The term $-d^\star(x,p)$ corresponds to a \emph{cooperative} disturbance realization, which may arise in stochastic settings but is generally absent when disturbances act adversarially during data collection.  
    In the latter case, when the disturbance is known to be adversarial, the uncertainty set in~\eqref{eq: disturbance inq} may be tightened from $2L^{d}\bar d$ to $L^{d}\bar d$ to avoid unnecessary conservatism.
\end{remark}}

\begin{figure*}[!t]
    \centering
    \begin{subfigure}[t]{0.32\textwidth}
        \centering
        \includegraphics[width=1\linewidth, trim={1cm 0 3cm 0}, clip]{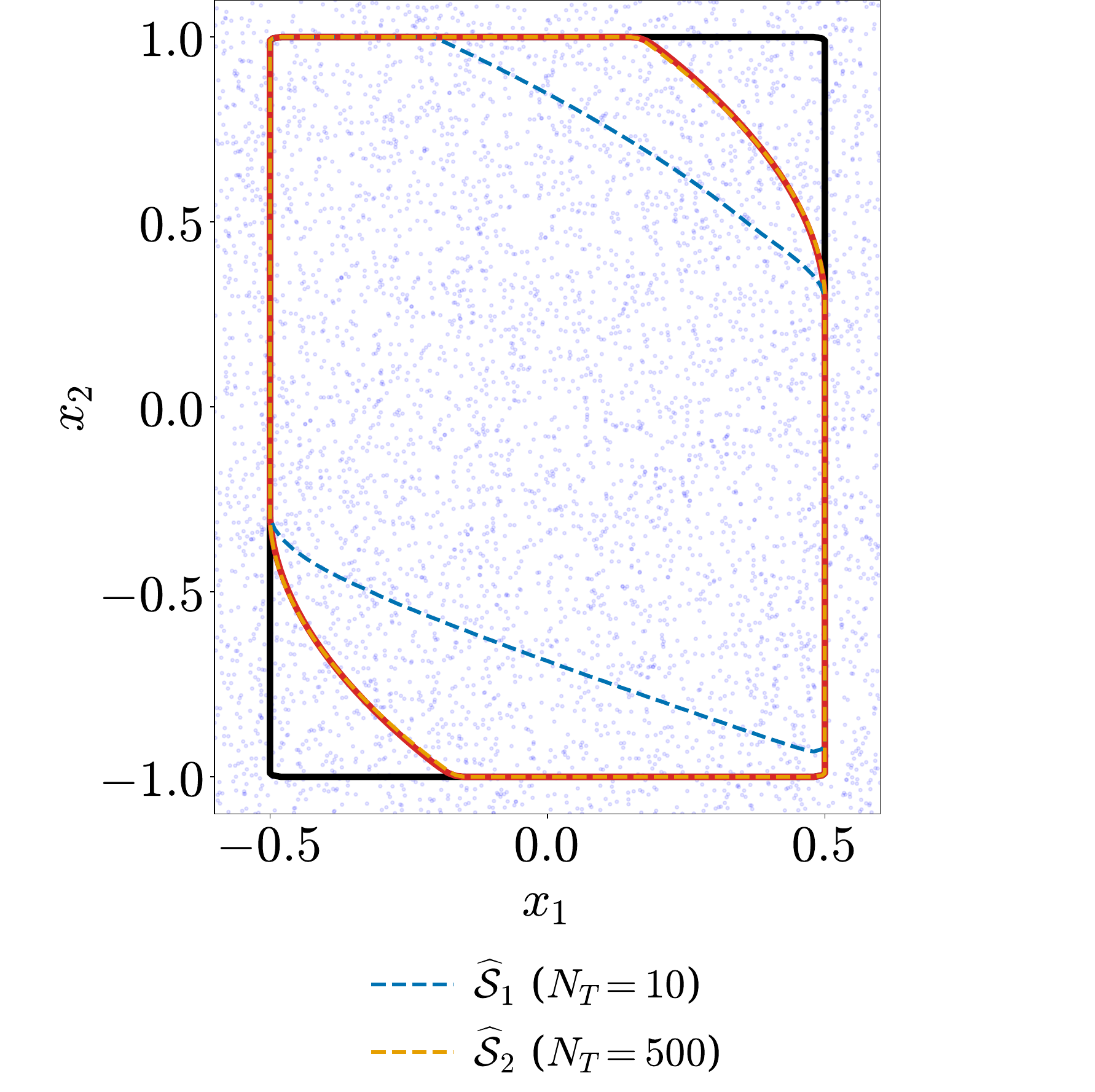}
        \caption{Effect of the number of samples \(N_T\). (Section~\ref{Number of Samples})}
        \label{fig:di_N}
    \end{subfigure}
    \hfill
    \begin{subfigure}[t]{0.32\textwidth}
        \centering
        \includegraphics[width=1\linewidth, trim={1cm 0 3cm 0}, clip]{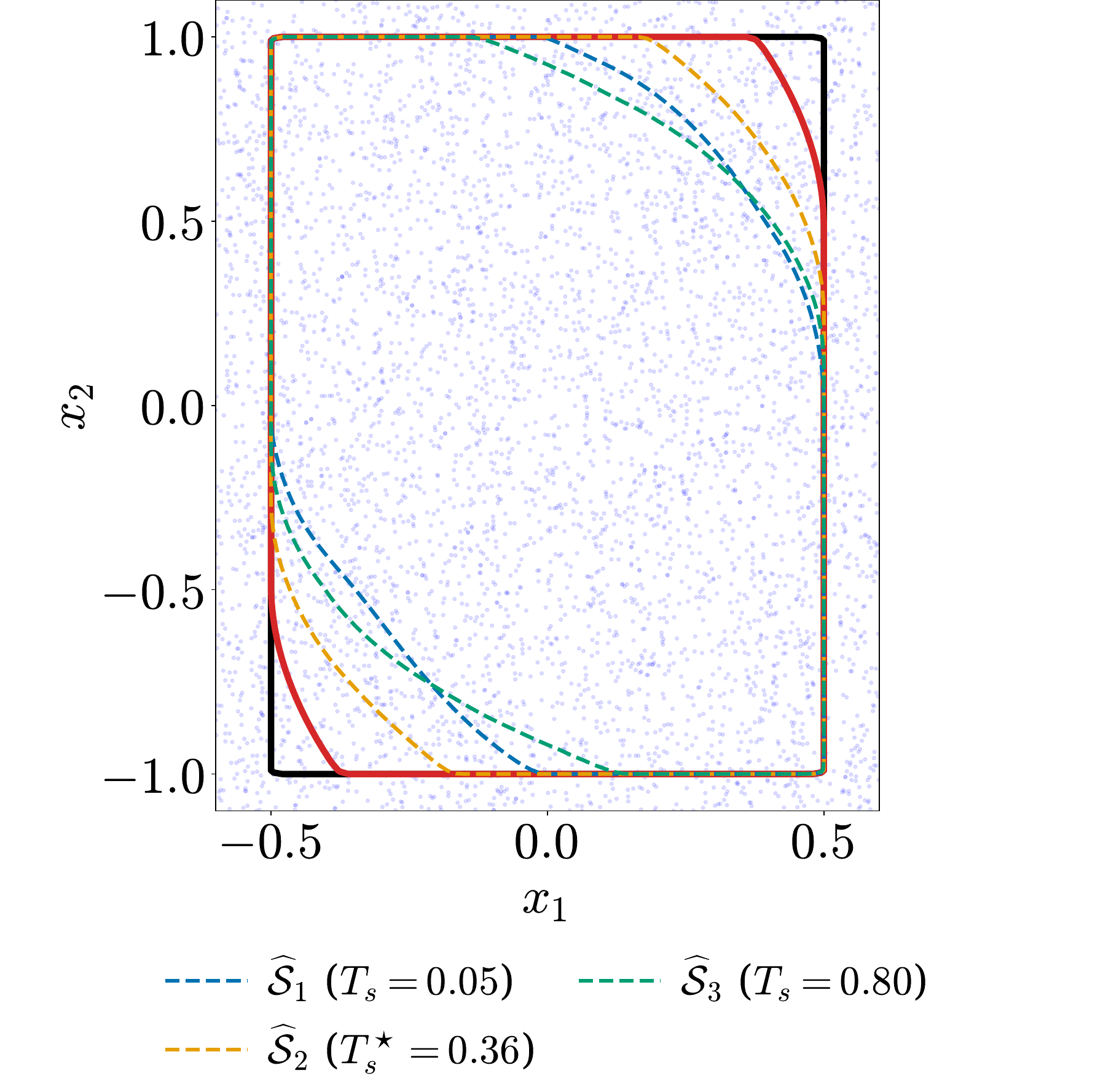}
        \caption{Effect of different sampling periods \(T_{\text{s}}\) (Section~\ref{sec:Sampling Period}).}
        \label{fig:di_Ts}
    \end{subfigure}
    \begin{subfigure}[t]{0.32\textwidth}
        \centering
        \includegraphics[width=1\linewidth, trim={1cm 0 3cm 0}, clip]{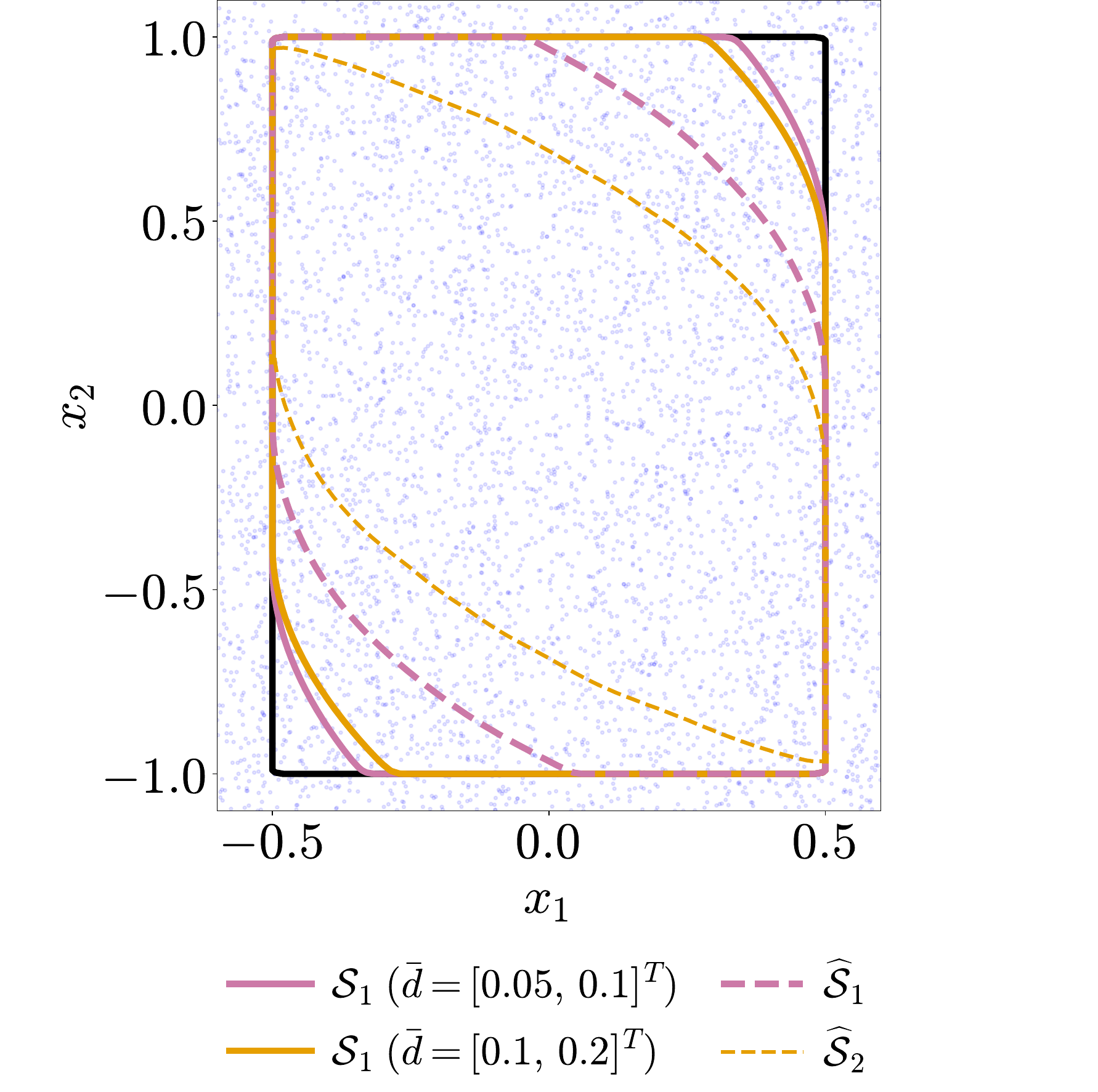}
        \caption{Effect of the disturbance bound \(\bar d\). (Section~\ref{sec:Disturbance Bound})}
        \label{fig:di_disturbance}
    \end{subfigure}
    \hfill
    \caption{Example 1: Double-integrator system under different settings. 
    (a, b) The constraint set, the model-based maximal safe set (Avoid BRT), and data points are illustrated by 
    \includegraphics[
        width=0.2\linewidth,
        trim=0 20pt 0 15pt,
        clip
    ]{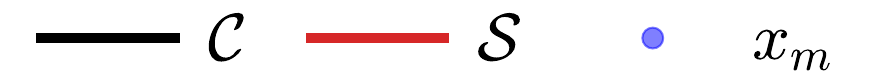}, respectively. 
    (c) Two model-based ground truth Avoid BRTs and their corresponding data-driven counterparts under different disturbance bounds are shown.}
    \label{fig:four-panel}
\end{figure*}

\section{Simulation Examples} \label{sec:results}

This section presents two case studies. The first is a constrained double integrator, which is used to assess the conservatism of the proposed method. The second is an aircraft taxiing model, which is used to certify closed-loop safety under perceptual uncertainty. The HJ time horizon \(t\) in the safety definitions is set to \(2\,\mathrm{s}\) for the double-integrator example and \(40\,\mathrm{s}\) for the aircraft taxiing example. In both cases, initial conditions are sampled uniformly over the prescribed domain, and state velocities are estimated from consecutive state measurements. For the double integrator, each trajectory segment spans one sampling interval \(T_{\mathrm{s}}\), so that \(N_s=2\). For the aircraft model, longer trajectory segments are used, with \(N_s=30\). In the double-integrator study, the control input \(u\) and disturbance \(d\) are randomly selected from the extreme points of \(\mathcal{U}\) and \(\mathcal{D}\), respectively, at the beginning of each segment and are held constant over the segment. In the aircraft taxiing study, the control input is generated by a nonlinear state-feedback law. \rev{Both examples use component-wise Lipschitz constants, except for the double-integrator studies in Subsections~\ref{sec:Sampling Period} and~\ref{sec:Disturbance Bound}, where the uniform Lipschitz representation is used.}

\subsection{Linear Case: Constrained Double Integrator}

Consider the constrained double integrator
\begin{equation} \label{eq:double_integrator}
\begin{aligned}
\dot{\mathbf x}(s) &=
\begin{bmatrix}
0 & 1\\
0 & 0
\end{bmatrix} \mathbf x(s) +\mathbf u(s) + \mathbf d(s), \\[-2pt]
\mathbf x_m(s) &= \mathbf x(s) + \mathbf n_x(s),
\end{aligned}
\end{equation}
where \(\mathbf x(s),\, \mathbf x_m(s),\, \mathbf u(s),\,\mathbf  d(s) \in \mathbb{R}^2\). Control inputs, disturbances, and measurement noise are bounded componentwise:
\begin{equation} \label{eq:constraints_example1}
|\mathbf u(s)| \le \bar{u}, \,
|\mathbf d(s)| \le \bar{d}, \,
|\mathbf n_{x}(s)| \le \sigma_x.
\end{equation}
with \(\bar u =[0.5,1]^\top\). For various \(N_T\), \(\bar d\), \(\sigma_x\), \(L^x \in \mathbb{R}^{2\times2}\), and \(L^d \in \mathbb{R}^{2\times2}\), we compute the Avoid BRT over the time horizon $s \in [-2,0]$. The prescribed domain\footnote{The prescribed domain, \(\mathcal{X}_{\text{d}}\), denotes the computational state-space region over which the Hamilton--Jacobi PDE is solved.}, the constraint and failure sets are defined as follows, respectively.
\[
\mathcal{X}_{\text{d}} = \mathcal{R}\!\left(
\begin{bmatrix}1 \\ 1.5 \end{bmatrix}
\right),\,
\mathcal{X}_{\text{c}} = \mathcal{R}\!\left(
\begin{bmatrix}0.5 \\ 1\end{bmatrix}
\right), \, \mathcal{X}_{\text{f}} = \mathcal{X}_{\text{d}} \setminus 
\mathcal{X}_{\text{c}}.
\]

\subsubsection{Number of Samples (Fig.~\ref{fig:di_N})} \label{Number of Samples}

Let \(\bar d = [0.1,\,0.1]^\top\), \(\sigma_x = [0,\,0]^\top\), and the Lipschitz constants be
\[
L^x_c =
\begin{bmatrix}
0 & 1 \\
0 & 0
\end{bmatrix},
\quad
L^d_c =
\begin{bmatrix}
1 & 0 \\
0 & 1
\end{bmatrix}.
\]
We consider the uncertainty set~\eqref{eq:R-set2}, constructed using the component-wise Lipschitz constants and a small sampling period \(T_{\mathrm{s}} = 0.01\,\mathrm{s}\). 
For \(N_T = 500\), the approximate safe set coincides with the true safe set. 
\rr{This shows that, as the dataset becomes denser, the approximated safe set converges to the true safe set. Notably, this convergence is observed even in the presence of exogenous disturbances, see Theorem \ref{thm:ddh}.}

\subsubsection{Sampling Period (Fig.~\ref{fig:di_Ts})} \label{sec:Sampling Period}
To assess discretization effects, let the sampling period be \(T_{\text{s}} \in \{\,0.05,\,0.36,\,0.8\}\,\mathrm{s}\) and \(N_T = 5000\). 
We use the uniform Lipschitz representation with \(L^x = 1\), \(L^d = 1.41\), and \(M = 2.35\). 
In the presence of noise \(\sigma_x = [0.02,\,0.02]^\top\) and no disturbance \(d = [0,\,0]^\top\), \(T_{\text{s}}^\star = 0.36\,\mathrm{s}\) yields the least conservative safe set. As \(T_{\text{s}}\) increases or decreases, the Avoid BRT correspondingly shrinks, see Section \ref{sec: practical}.

\subsubsection{Disturbance Bound (Fig.~\ref{fig:di_disturbance})} \label{sec:Disturbance Bound}
We consider \(\sigma_x = [0.02,\,0.02]^\top\) and sampling period \(T_{\text{s}}^\star = 0.36\, s\). As expected, larger disturbance bounds \(\bar d_i\) enlarge the uncertainty sets and thereby reduce the size of the safe set.

\subsection{Nonlinear Case: Aircraft Taxiing with a Nonlinear Closed-Loop Controller}

We consider a modified nonlinear model of aircraft taxiing, adopted from \cite{lin2025robust}. The state vector is \(x = [\,p\;\; \theta\,]^\top \in \mathbb{R}^2\), where \(p\) is the lateral position $[\mathrm{m}]$ and \(\theta\) is the heading angle $[\mathrm{rad}]$. The longitudinal speed is constant, \(v_0 = 5\,[\mathrm{\frac{m}{s}}]\), and the control input is the heading rate \(\omega \in \mathbb{R}\), generated by a nonlinear feedback law. The dynamics are
\begin{equation}
\label{eq:nonlinear_dyn}
\dot{\mathbf x}(s) = 
\begin{bmatrix}
v_0 \sin{\boldsymbol{\theta} (s)}\\[2pt]
\boldsymbol{\omega}  (s)
\end{bmatrix},
\end{equation}
with feedback
\begin{equation}
\label{eq:omega_law}
\omega(\hat {\mathbf {p}} (s), \hat {\boldsymbol{\theta}} (s)) = \tan(a\,\hat {\mathbf {p}} (s) + b\, \hat {\boldsymbol{\theta}} (s)),
\end{equation}
where \(a=-0.013\) and \(b=-0.44\) are controller parameters, and \(\hat {\mathbf {p}} (s), \hat {\boldsymbol{\theta}} (s)\) are estimated states defined as follows:
\begin{equation}
    \hat {\mathbf {p}} (s) =  {\mathbf {p}} (s) + \mathbf e_{p}, \quad\hat {\boldsymbol{\theta}} (s) = {\boldsymbol{\theta}} (s) + \mathbf  e_{\theta} (s).
\end{equation}
where \(\mathbf e_{p}(s)\) and \(\mathbf e_{\theta}(s)\) are perceptual uncertainties. We also define the prescribed domain, constraint, failure and target sets as follows:
\[
\begin{aligned}
\mathcal{X}_{\mathrm{d}} 
&= \mathcal{R}\!\left(
\begin{bmatrix}
11\\
0.49
\end{bmatrix}
\right), 
\quad
\mathcal{X}_{\mathrm{c}} 
= \mathcal{R}\!\left(
\begin{bmatrix}
10\\
\infty
\end{bmatrix}
\right),\\
\mathcal{X}_{\mathrm{f}} 
&= \mathcal{X}_{\mathrm{d}} \setminus \mathcal{X}_{\mathrm{c}},
\quad
\mathcal{X}_{\mathcal{T}} 
= \mathcal{R}\!\left(
\begin{bmatrix}
10\\
0.1
\end{bmatrix}
\right).
\end{aligned}
\]

\subsubsection{Model-based solution using HJ}
In the model-based setting, the perceptual uncertainties can be modeled as an exogenous disturbance. Therefore, the closed-loop law \eqref{eq:omega_law} can be re-represented as:
\begin{equation}
\label{eq:omega_law_model_based}
\omega(z, d) = \tan(z + d),
\end{equation}
where \(z = a p + b \theta\) and \(d = a e_p + b e_\theta\). Let the perceptual uncertainties be bounded by \(|e_p|\le 0.2 \ \text{[m]} \) and \(|e_\theta|\le 0.034 \ \text{[rad] (or } 2 \ \text{[deg]})\). Then \(|d| \leq |a|\,|e_p| + |b|\,|e_\theta|
\approx 0.018\).


\subsubsection{Data-driven solution using DDH}
The DDH framework requires upper bounds on the true Lipschitz constants of the system. The Jacobian of~\eqref{eq:nonlinear_dyn} with respect to the state and disturbance is
\begin{equation}
\label{eq:jacobians}
\frac{\partial f}{\partial x} =
\begin{bmatrix}
0 & v_0 \cos\theta\\[2pt]
a\sec^2(z+d) & b\sec^2(z+d)
\end{bmatrix},
\frac{\partial f}{\partial d} =
\begin{bmatrix}
0\\[2pt]
\sec^2(z+d)
\end{bmatrix}.
\end{equation}
Thus, the component-wise Lipschitz bounds are
\begin{align}
\label{eq:jac_bounds_numeric}
\left|\frac{\partial f}{\partial x}\right|
\;\le\;
\begin{bmatrix}
0 & 5\\[2pt]
0.015 & 0.508
\end{bmatrix},
\quad
\left|\frac{\partial f}{\partial d}\right|
\;\le\;
\begin{bmatrix}
0\\[2pt]
1.15
\end{bmatrix}.
\end{align}
Therefore, any choice of \(\big| \tfrac{\partial f}{\partial x} \big| \le L^{x}_\mathrm{c}\) and \(\big| \tfrac{\partial f}{\partial d} \big| \le L^{d}_\mathrm{c}\) is a valid selection for the Lipschitz constants. We sample \(N_T = 2000\) trajectories with initial conditions uniformly from the domain set, $\mathcal{R}\!(
[11,0.49]^\top)$, see Fig. \ref{fig:trajectories}. For each sample, we generate a trajectory using fourth-order Runge–Kutta integration over a fixed time horizon \(30\,T_{\text{s}}\) with sampling period \(T_{\text{s}}^\star=0.26\,\mathrm{s}\). The heading control is computed from \eqref{eq:omega_law}, with \(e_p\) and \(e_\theta\) drawn uniformly within their bounds. Note that in the data-driven setting, there is no need to account for the equivalent disturbance in \eqref{eq:omega_law_model_based} as the effect of perceptual uncertainty is directly considered. The corresponding Reach–Avoid BRT computed using DDH and its exact counterpart are calculated and shown in Fig.~\ref{f2}. \rr{As expected, the robust DDH yields an under-approximation of the true safe set. The observed gap between the DDH set and the ground truth, particularly in the upper-left and lower-right regions, is mainly due to the worst-case treatment of perception error. This assumption is adopted to obtain a deterministic safety certificate, meaning that every admissible perception error within the prescribed uncertainty bound is treated as potentially realizable. Consequently, the resulting safe set is conservative but remains certifiably safe. Reducing this conservatism by replacing the deterministic worst-case uncertainty set with probabilistic error models can significantly reduce conservatism, but at the cost of replacing deterministic safety guarantees with probabilistic ones.}

\begin{figure}[tb]
    \centering
    \includegraphics[width=1\linewidth]{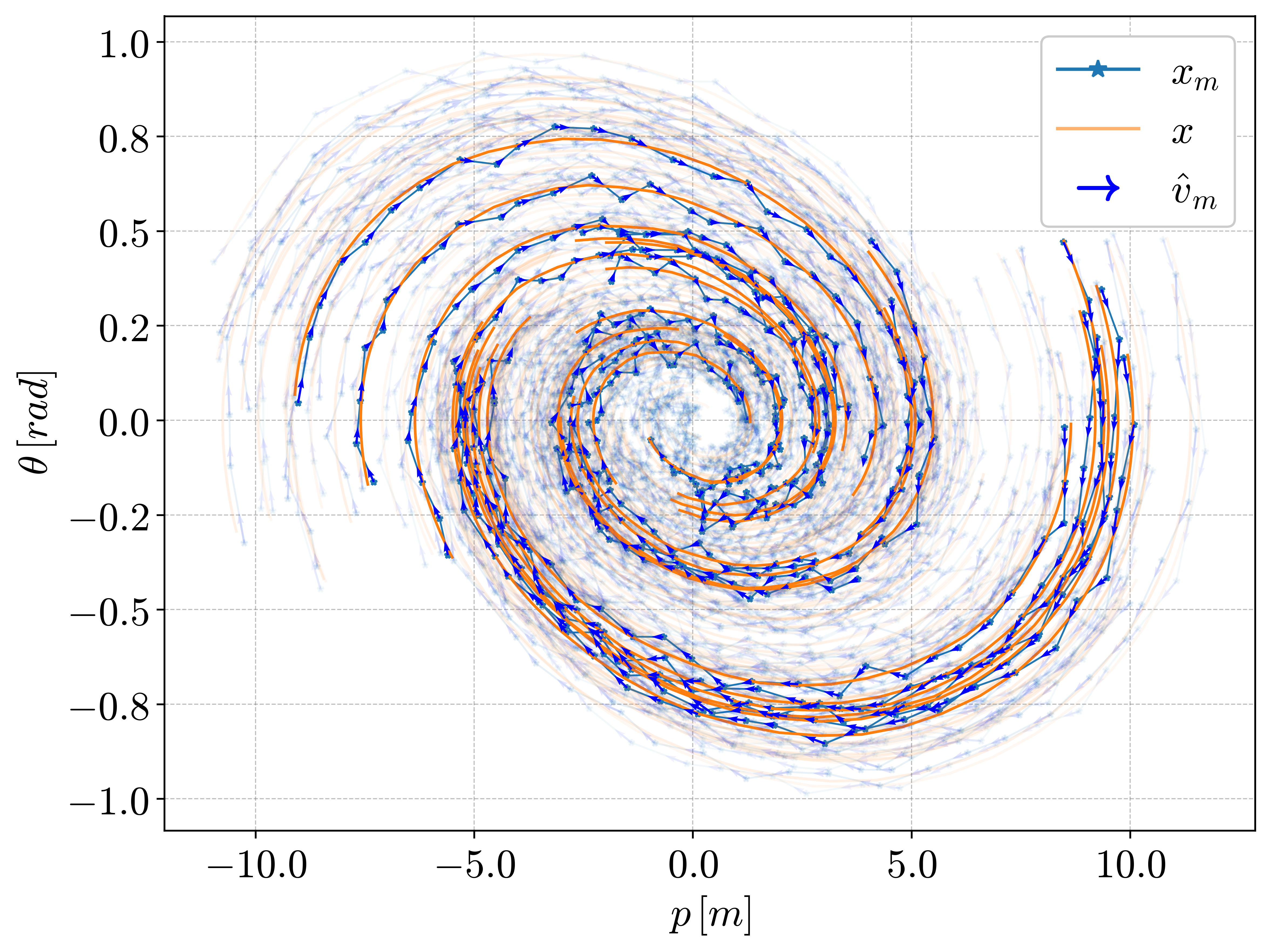}
    \caption{Sampled trajectories of the nonlinear aircraft taxiing system. The points \(x_m\) indicate noisy measurements, while \(\hat{v}_m\) denotes the estimated velocities computed from successive samples. The curve \(x(t)\) represents the underlying continuous-time trajectory.}
    \label{fig:trajectories}
\end{figure}
\begin{figure}[tb]
    \centering
    \includegraphics[width=1\linewidth]{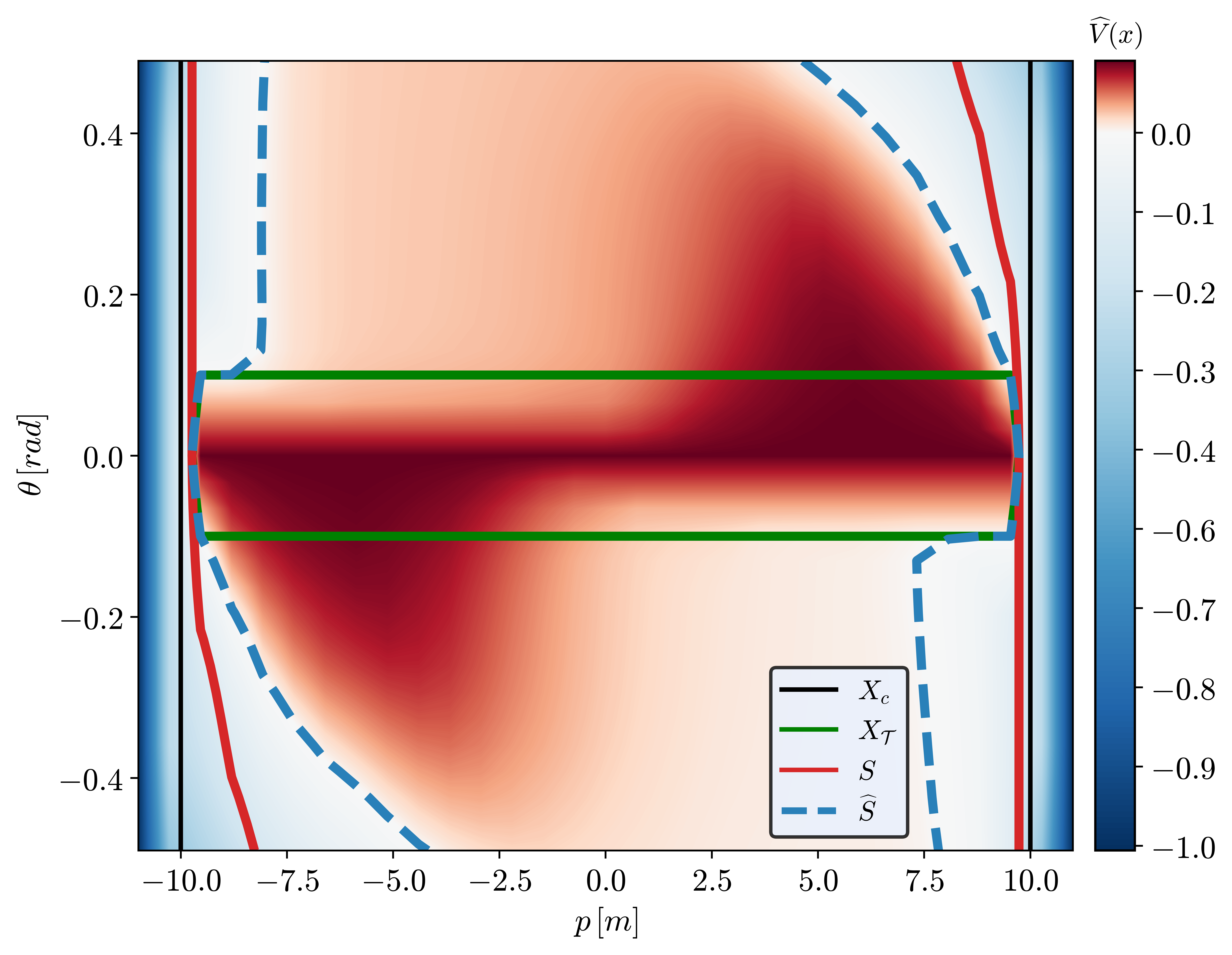}
\caption{Reach-Avoid BRT for the aircraft taxiing system. Comparison between the DDH-based approximation and the exact solution. Sets \(X_c\), \(X_{\mathcal{T}}\), \(S\), and \(\widehat{S}\) represent the constraint set, the target set, the exact reach-avoid safe set, and the data-driven reach-avoid safe set, respectively.}
    \label{f2}
\end{figure}

\section{Conclusion}
\label{sec:conclusion}

This paper presented a Robust Data-Driven Hamiltonian (R-DDH) reachability framework for safe set computation directly from noisy measurements. By constructing a conservative Hamiltonian via Lipschitz-based uncertainty sets, we derived a rigorous lower bound on the exact Hamiltonian, enabling inner approximations of the maximal robust safe set. The proposed formulation explicitly accounts for unknown but bounded exogenous disturbances, measurement noise, and sampling-induced effects, thereby robustifying the framework in \cite{choi2025data} and extending its practical applicability. Additionally, it provides practical guidelines for selecting an appropriate sampling period.

To reduce the conservatism inherent in the worst-case uncertainty assumption, future research will extend the DDH framework to incorporate probabilistic uncertainty representations. In addition, a possible direction is to develop data-efficient experimental design strategies based on active safe exploration policies, with the goal of reducing redundancy in data collection. This is motivated by the empirical observation that the Hamiltonian is typically determined by a small subset of extreme data points. These enhancements will facilitate the scalability of the R-DDH framework to higher-dimensional systems. Finally, these advances make R-DDH applicable to a broader range of safety-critical real-world control systems.

\balance

\bibliographystyle{IEEEtran}
\bibliography{Ref}

\end{document}